
\documentclass[sigconf, nonacm]{acmart}

\newcommand\vldbdoi{10.14778/3551793.3551862}
\newcommand\vldbpages{3186 - 3198}
\newcommand\vldbvolume{15}
\newcommand\vldbissue{11}
\newcommand\vldbyear{2022}
\newcommand\vldbauthors{\authors}
\newcommand\vldbtitle{\shorttitle} 

\newcommand\vldbpagestyle{empty} 


\newif\iftechreport
\techreporttrue

\usepackage[T1]{fontenc}
\usepackage[utf8]{inputenc}
\usepackage{booktabs}
\usepackage{multirow}
\usepackage{balance}
\usepackage{xcolor} 

\usepackage{verbatim} 
\usepackage{graphicx}
\usepackage{caption}
\usepackage{subcaption} 
\usepackage{setspace} 
\usepackage{float}
\usepackage{xcolor,colortbl}
\usepackage[normalem]{ulem}
\usepackage{tikz}
\usepackage{pgfplots,pgfplotstable} 
\usetikzlibrary{patterns}
\usetikzlibrary{matrix,decorations.pathreplacing, calc, positioning}
\usetikzlibrary{graphs,graphs.standard,quotes}
\usetikzlibrary{arrows, arrows.meta, automata}
\usepgfplotslibrary{groupplots}

\pgfplotsset{compat=1.16,
  /pgfplots/ybar legend/.style={
  /pgfplots/legend image code/.code={%
      \draw[##1,/tikz/.cd,yshift=-0.2em]
      (0cm,0cm) rectangle (10pt,4pt);},
  },
}
\definecolor{ctorange}{RGB}{213, 94, 0}
\definecolor{ctblue}{RGB}{0, 114, 178}
\definecolor{ctyellow}{RGB}{240, 228, 66}
\definecolor{ctgreen}{RGB}{0, 158, 115}
\definecolor{ctdarkyellow}{RGB}{230, 159, 0}
\pgfplotscreateplotcyclelist{three-colors}{
    {draw=none,fill=ctorange},
    {draw=none,fill=ctyellow},
    {draw=none,fill=ctblue},
}
\pgfplotscreateplotcyclelist{noadapt-adapt}{
    {draw=none,fill=ctorange},
    {draw=none,fill=ctyellow},
}
\pgfplotscreateplotcyclelist{gs-gb}{
    {draw=none,fill=ctdarkyellow},
    {draw=none,fill=ctgreen},
}

\usepackage[ruled,vlined]{algorithm2e}
\usepackage{listings}
\definecolor{key-color}{rgb}{0.8, 0.47, 0.196}
\lstset{language=C++, showstringspaces=false,
    columns=fullflexible,
    stepnumber=1,
    commentstyle=\color{gray}, keywordstyle=\color{key-color},
    xleftmargin=4pt, xrightmargin=4pt,framesep=3pt,
    mathescape=true,frame=tb,escapechar=\%,keepspaces=true,captionpos=b}
\lstset{
    morekeywords={abstract, boolean, bool}
}
\usepackage{enumitem} 
\usepackage{hyphenat} 

\usepackage{doi} 
\usepackage{hyperref}
\expandafter\def\expandafter\UrlBreaks\expandafter{\UrlBreaks\do\/\do\-\do\.} 
\definecolor[named]{Purple}{cmyk}{0.55,1,0,0.15}
\definecolor[named]{DarkBlue}{cmyk}{1,0.58,0,0.21}
\hypersetup{bookmarksnumbered,unicode,naturalnames,colorlinks,breaklinks,
linkcolor=Purple,
citecolor=Purple,
urlcolor=DarkBlue,
filecolor=DarkBlue}

\usepackage{array}
\newcolumntype{L}[1]{>{\raggedright\let\newline\\\arraybackslash\hspace{0pt}}m{#1}}

\newenvironment{squishedlist}
{
  \begin{list}{$\bullet$}
   {
     \setlength{\itemsep}{0pt}
     \setlength{\parsep}{2pt}
     \setlength{\topsep}{1.0pt}
     \setlength{\partopsep}{0pt}
     \setlength{\leftmargin}{01.5em}
     \setlength{\labelwidth}{1em}
     \setlength{\labelsep}{0.5em}
   }
}
{
   \end{list}
}
\newenvironment{squishedenumerate}
{
  \begin{enumerate}[label=(\roman*),topsep=0.2em,itemsep=0.2em,leftmargin=1.7em] 
}
{
  \end{enumerate}
}


\newcommand{\noun}[1]{\textsc{#1}}
\newcommand{\graphsurge}[1]{\noun{Graphsurge}}

\newcommand{\diffonly}[1]{\texttt{diff-only}}
\newcommand{\scratch}[1]{\texttt{scratch}}

\newlength{\subsubheadingvspace}
\setlength{\subsubheadingvspace}{3pt}

\newlength{\imagebottommargin}
\setlength{\imagebottommargin}{-15pt}

\newtheorem{theorem}{Theorem}[section]

\newcounter{theorem2}
\newtheorem{example}[theorem2]{Example}

\newcommand{\JOIN}{\texttt{Join}}

\newcommand{\DC}{\textsc{DC}}
\newcommand{\VDC}{\textsc{VDC}}
\newcommand{\IFE}{\textsc{IFE}}
\newcommand{\JOD}{\textsc{JOD}}

\newcommand{\DET}{\textsc{Det-Drop}}
\newcommand{\PROB}{\textsc{Prob-Drop}}
\newcommand{\DD}{\textsc{DD}}
\newcommand{\SC}{\textsc{Scratch}}
\newcommand{\diffs}{differences}
\newcommand{\diff}{difference}
\newcommand{\Random}{\texttt{Random}}
\newcommand{\Degree}{\texttt{Degree}}
\newcommand{\Tau}{\mathrm{T}}


\begin{document}
\title{Optimizing Differentially-Maintained Recursive Queries on Dynamic Graphs}

\author{Khaled Ammar}
\affiliation{%
  \institution{University of Waterloo,~BorealisAI}
  \city{Waterloo, ON}
  \country{Canada}
}
\email{khaled.ammar@uwaterloo.ca}

\author{Siddhartha Sahu}
\affiliation{
  \institution{University of Waterloo}
  \city{Waterloo, ON}
  \country{Canada}
}
\email{s3sahu@uwaterloo.ca}

\author{Semih Salihoglu}
\affiliation{
  \institution{University of Waterloo}
  \city{Waterloo, ON}
  \country{Canada}
}
\email{semih.salihoglu@uwaterloo.ca}

\author{M. Tamer \"{O}zsu}
\affiliation{
  \institution{University of Waterloo}
  \city{Waterloo, ON}
  \country{Canada}
}
\email{tamer.ozsu@uwaterloo.ca}

\begin{abstract}
Differential computation (DC) is a highly  general incremental computation/view maintenance technique that can maintain the output of an arbitrary and possibly recursive dataflow computation upon changes to its base inputs. As such, it is a promising technique for graph database management systems (GDBMS) that  support continuous 
recursive queries over dynamic graphs.  Although differential computation can be highly efficient for maintaining these queries, it can require prohibitively large amount of memory. This paper studies how to reduce the memory overhead of DC with the goal of increasing the  scalability of systems that adopt it. We propose a suite of optimizations that are based on dropping the differences of operators, both completely or partially, and recomputing these differences when necessary. We propose deterministic and probabilistic data structures to keep track of the dropped differences. Extensive experiments demonstrate that the optimizations can improve the scalability of a DC-based continuous query processor.
\end{abstract}

\maketitle

\pagestyle{\vldbpagestyle}
\begingroup\small\noindent\raggedright\textbf{A shorter version of this paper appears in:}\\
\vldbauthors. \vldbtitle. PVLDB, \vldbvolume(\vldbissue): \vldbpages, \vldbyear.\\
\href{https://doi.org/\vldbdoi}{doi:\vldbdoi}
\endgroup





\section{Introduction}
\label{sec:introduction}

Graph queries that are recursive in nature, such as single pair shortest path (SPSP), single source shortest path (SSSP),
variable-length join queries, or regular path queries (RPQ), are prevalent across applications that are developed on graph database management systems (GDBMS). Many of these applications require maintaining query results incrementally, as the graphs stored in GDBMSs are dynamic and evolve over time. For example, millions of travellers use navigation systems to find the fastest route between two points on a map. To keep the route information fresh, these systems need to continuously update their SPSP query results as road conditions change. Similarly, several knowledge graphs, such as  RefinitivGraph~\cite{refinitivGraph} contain billions of connections between real-world entities, such as companies, banks, stocks, and managers. A Refinitiv product, World-Check Risk Intelligence\footnote{https://www.refinitiv.com/en/products/world-check-kyc-screening}, searches for direct and indirect connections between entities to help companies and banks comply with mandatory regulations. Since these graphs are frequently updated by new facts, these applications require the queries to be continuously evaluated.

Many GDBMSs have capabilities to evaluate one-time versions of recursive queries over static graphs, but generally do not support incrementally maintaining them. As such, in dynamic graphs, existing systems require rerunning these queries from scratch at the application layer. A GDBMS that can incrementally maintain recursive queries inside the system would lead to easier and more efficient application development. In this paper we investigate the use of \emph{differential computation} (DC)~\cite{mcsherry:ddf}, a new incremental maintenance technique, to maintain the results of recursive queries in GDBMSs. DC is designed to maintain arbitrarily cyclic (thus, recursive) dataflow programs~\cite{mcsherry:ddf, naiad}. 

Unlike using a specialized incremental derivation rule, DC starts from a dataflow program that evaluates the one-time version of the query. By keeping track of the differences to the inputs and outputs of the operators across different iterations,
called {\em timestamps} in DC terminology,
 DC maintains and propagates the changes between operators as the original inputs to the dataflow are updated. This makes DC more general than other techniques, as it is agnostic to the underlying dataflow computation. 

However, DC can have significant memory overhead~\cite{iturbograph}, as it may need to monitor a high number of input and output differences between operators. For example, Table~\ref{table:DiffScalabilityIssue} shows the performance and memory overhead of the DC implementation of the standard Bellman-Ford algorithm for maintaining the results of SSSP queries on the Skitter internet topology dataset~\cite{snapDatasets}. In the experiment, we modify the graph with 100 batches of 1 random edge insertion each, and provide the system with 10GB memory to store the generated differences. The table also shows the performance of  a baseline that re-executes the Bellman-Ford algorithm from scratch after each update, thus not requiring any memory for maintaining these queries. Although the differential version of the algorithm is about five orders of magnitude faster, it cannot maintain more than 10 concurrent queries due to its large memory requirement. This limits the scalability of systems that adopt DC.

\begin{table}[t]
    \centering
    \iftechreport
    \caption{Execution time (in seconds) for an SPSP workload, on Skitter dataset, using a scratch algorithm, which re-executes a standard non-incremental Bellman-Ford algorithm, vs.
a differential computation version, which keeps track of changes. Differential computation is more than five order of magnitude faster, but fail with out of memory when the number of queries increases.}
    \else 
    \caption{Execution time (in seconds) for an SPSP workload using different number of queries}
    \fi
    \vspace{-10pt}
    \begin{tabular}{|l||c|c|c|c|}
        \hline
        Number of Queries & 10 & 20 & 30 & 40 \\
        \hline
        \hline
        Scratch & $6.1K$ &$13.6K$ & $20.7K$ & $28.3K$ \\
        \hline
        Differential Computation  & $0.2$ & OOM & OOM & OOM \\
        \hline
    \end{tabular}
    \label{table:DiffScalabilityIssue}
        \vspace{-10pt}
\end{table}

In this paper, we study how to reduce the memory overheads of DC to increase its scalability when maintaining the popular classes of recursive queries mentioned above. Our optimizations are broadly based on \emph{dropping differences} and instead recomputing them when necessary. We focus on optimizing the differential version of a common subroutine in graph algorithms where vertices aggregate their neighbours' values iteratively and propagate their own values to neighbours until a stopping condition, such as a fixed point, is reached. Variants of this subroutine with different aggregation, propagation, and stopping conditions can be used to evaluate 
all of the recursive queries we focus on in this paper. This routine consists of \JOIN\ operator and an aggregation operator, e..g, a \texttt{Min}, and has been given different terms in literature, such as \emph{propagateAndAggregate}~\cite{salihoglu:help} or \emph{iterative matrix vector multiplication}~\cite{kang:pegasus}. We refer to it as \emph{iterative frontier expansion} (\IFE).

In this work, we start with the base version implementation of DC as in the  differential dataflow (\DD)~\cite{mcsherry:ddf} and its precursor Naiad system~\cite{naiad}. 
We propose two main optimizations: \textsc{Join-On-Demand} (\JOD) (Section~\ref{sec:complete-drop}) that
completely drops the output differences of the \JOIN\ operator of the \IFE\ dataflow and only computes these differences when DC needs to inspect them; and (2) two {\em partial difference dropping} optimizations (Section ~\ref{sec:partial}) that
 drop some of the differences in the output of the aggregation operator in \IFE. 

Our partial difference dropping optimization offers users a knob to drop a certain percentage of the system's \diffs.
We begin by describing a baseline deterministic optimization \textsc{Det-Drop} that explicitly keeps track of the vertex and timestamp of each dropped difference. We show that although \textsc{Det-Drop} reduces the memory consumption of a system, it also has inherent limitations in terms of scalability improvements, as the additional state it keeps is proportional to the amount of differences that it drops. We then propose a probabilistic approach \textsc{Prob-Drop} that addresses this shortcoming by leveraging a probabilistic data structure, specifically a Bloom filter.
\textsc{Prob-Drop} may attempt to reconstruct a non-existing difference due to false negatives but it more effectively reduces the memory consumption, so a system using \textsc{Prob-Drop} needs to drop fewer differences to meet same memory budgets as \textsc{Det-Drop}.
Finally, we describe an optimization that uses the degree information of each vertex to choose which \diffs\ to drop as opposed to dropping them randomly.

We demonstrate
that \JOD\ reduces the number of differences up to $8.2\times$ in comparison to vanilla \DC\ implementations. We also show that
exploiting the degree information to select  the differences to drop can improve the performance of partial
dropping optimizations (\textsc{Det-Drop} or \textsc{Prob-Drop}) by several orders of magnitude.
We further show that \textsc{Prob-Drop} achieves up to $1.5\times$ scalability relative to \textsc{Det-Drop} when selecting the differences to drop based on degrees. Our optimizations overall can increase the scalability of our differential algorithms by up to $20\times$ in comparison to \DD, while still outperforming a baseline that reruns computations from scratch by several orders of magnitude.

\section{Related Work}
\label{sec:rw}

Broadly, there are two approaches to maintaining the results of a computation over a dynamic graph: (i) using a computation-specific specialized solution; or (ii) using a generic incremental computation/view maintenance solution that is oblivious to the actual computation, at least for some class of computations. DC falls under the second category. Below, we review both approaches.

\subsection{Specialized Techniques and Systems}
There is extensive literature dating back to 1960s on developing specialized incremental versions of (aka {\em dynamic}) graph algorithms that maintain their outputs as an input graph changes. Many of the earlier work focuses on versions of shortest path algorithms, in particular all pairs shortest paths computation~\cite{demetrescu2001fully, roditty2011dynamic, DynamicAPSP, APSP-Chen2005, APSP-Chen2012, networkEvaluation1967, rodionov1968parametric}. These works aim at developing fast algorithms that can, in worst-case time, be faster than recomputing shortest paths upon a single update, e.g., when the edge weights are integer values. 
\iftechreport
Fan et al~\cite{fan2017incremental} present theoretical results on the foundations of such algorithms. Specifically they show that the cost of performing six specific incremental graph computations, such as regular path queries and strongly connected components algorithms, cannot be bounded by only the size of the changes in the input and output. Then, they develop algorithms that have bounded guarantees in terms of the work performed to maintain the computation.
\fi

On the systems side, there are several graph analytics systems that enable users to develop incremental versions of a graph algorithm. GraphBolt~\cite{graphBolt} is a recent shared-memory parallel streaming system that can maintain dynamic versions of graph algorithms. 
\iftechreport
GraphBolt requires users to write explicit maintenance code in functions such as \texttt{retract} or \texttt{propagateDelta} that generic systems such as \DD\ do not require. As graph updates arrive, the system executes these functions, and if a user has provided a dynamic algorithm with provable convergence guarantees, the system will correctly maintain the results.
\else 
GraphBolt requires users to write explicit maintenance code that generic systems such as \DD\ do not require. iTurboGraph~\cite{iturbograph} focuses on incremental neighbour-centric graph analytics with an objective to reduce the overhead of large in-memory intermediate results in systems like GraphBolt and \DD.
\fi

\iftechreport
iTurboGraph~\cite{iturbograph} focuses on incremental neighbour-centric graph analytics with an objective to reduce the overhead of large in-memory intermediate results in systems like GraphBolt and \DD. iTurboGraph keeps graph data on disk as streams and model the graph traversal as enumeration of walks to avoid maintaining large intermediate results in memory. They avoid expensive random disk access by adopting the nested graph windows approach~\cite{turboGraph}. Instead, our proposed solutions keep the intermediate results in memory and drop these differences to reduce memory overhead. 
\fi

Broadly, programming specialized algorithms or GraphBolt-like 
systems can be more efficient than generic solutions. For example, several references have demonstrated this difference between \DD\ and GraphBolt~\cite{graphBolt,  Graphsurge}. In contrast, generic solutions such as \DD, which we focus on in this work, are fundamentally different and have the advantage that users can program arbitrary static versions of their algorithms, which will be automatically maintained. 
Therefore they are suitable as core incremental view maintenance techniques to integrate in general data management systems.

\subsection{Generic Techniques and Systems}
\label{sec:IVM}

When an input graph is modeled as a set of relations and a graph algorithm is modeled as a query over these relations,  maintaining graph computation can be modeled as {\em incremental view maintenance}, where the view is the final output of the query. 
Traditional incremental view maintenance (IVM) techniques for recursive SQL and Datalog queries have focused on variants of incremental maintenance approaches~\cite{green2013datalog} such as Delete-and-Rederive, which consists of a set of delta-rules that can produce the changes in the outputs of queries upon changes to the base relations. 
\iftechreport
These rules can be highly inefficient as they first delete all derivations of updated/removed facts and then redrive them again using the updated facts, only to finally detect whether any of the deletions and/or additions have effects on the final result. 
\fi
This contrasts with DC as it does not store intermediate computations to speed up processing. Interestingly, the only incremental open-source Datalog implementation we are aware of does not use the Delete-Rederive maintenance algorithm but uses DC~\cite{ddlog}. This work compiles Datalog programs into \DD\ programs, so ultimately uses vanilla \DD, which we optimize and use as a baseline in our work.

Tegra~\cite{DBLP:conf/nsdi/IyerPPGS21} is a system developed on top of Apache Spark~\cite{zaharia:spark}, that is designed to perform ad-hoc window-based analytics on a dynamic graph. Tegra allows the creation of arbitrary snapshots of  graphs and executes computations on these snapshots. The system has a technique for sharing arbitrary computation across snapshots through a computation maintenance logic similar to DC. However, the system is optimized for retrieving arbitrary snapshots quickly instead of sharing computation across snapshots efficiently.

There has been several systems work that use the generic incremental maintenance capabilities of DC. GraphSurge~\cite{Graphsurge} is a distributed graph analytics system that lets users create multiple arbitrary views of a graph organized into a \textit{view collection} using a declarative \textit{view definition language}. Users can then run arbitrary computations on these views using a general programming API that uses \DD\ as its execution engine, which allows Graphsurge to automatically share computation when running across multiple views. References~\cite{stuecklberger:expressing} implements a DC-based Software Defined Network Controller that incrementally updates the routing logic as the underlying physical layer changes. Similarly, RealConfig~\cite{zhang:incremental} is a \textit{network configuration verifier} uses \DD\ to incrementally verify updates to a network configuration without having to restart from scratch after every change.

\vspace{-7pt}
\section{Preliminaries}
\label{sec:preliminaries}
In this section we first review the graph and query models used in the paper. Then we summarize the IFE recursive algorithmic subroutine and differential computation. Table~\ref{tbl:queries} shows the notations and abbreviations 
that are used throughout the paper.

\begin{table}[t]
	\centering
	\caption{Commonly used notations and acronyms.}
	\vspace{-10pt}
	\begin{tabular}{|p{1.5cm} || p{6cm} |}
		\hline
		{\bf Acronym} & {\bf Description} \\
		\hline
		\hline
		$D$ and $\delta D$ & Vertex distance collections and their delta.\\
		\hline
		$E$ and $\delta E$ & Edges collections and their delta.\\
		\hline
		$J$ and $\delta J$ & The output of \texttt{Join} operator and their delta.\\
		\hline
		$C_{\langle G_k, i \rangle}$ & Collection $C$ at timestamp $\langle$ graph version $G_k$ and iteration $i$ $\rangle$.\\
		\hline
		$C^v$ & $C$'s partition by a key/vertex ID $v$.\\
		\hline
		\IFE\ & Iterative Frontier Expansion.\\
		\hline
		\DC\ & Differential Computation. \\
		\hline
		\VDC\ & Our Vanilla implementation for \DC.\ \\
		\hline
		\DD\ & Differential Dataflow system.\\
		\hline
		\JOD\ & Join-on-Demand optimization\\
		\hline
		\DET\ & Partial difference dropping optimization using a deterministic data structure.\\
		\hline
		\PROB\ & Partial difference dropping optimization using a probabilistic data structure.\\
		\hline
		\hline
	\end{tabular}
	\label{tbl:queries}
	\vspace{-15pt}
\end{table}

\vspace{-7pt}
\subsection{Graph and Query Model}
\label{sec:model}

We consider \textit{property graphs}, so vertices and edges can have attributes. Formally, a graph $G = (V, E, P_V, P_E)$, where $V$ is the set of vertices, $E$ is the set of directed edges, $P_V$ is the set of properties over vertices, and $P_E$ is the set of properties over edges. Our continuous queries  compute properties of vertices, which we refer to as their {\em states}. We will
not explicitly model states but these can be thought of as temporary properties in $P_V$.
 For an edge $e$, we maintain two properties:  $label(e)$, and $weight(e)$. If $G$ is unweighted, the the weights of each edge is set to $1$.

We focus on three recursive queries in this paper: SPSP, K-hop, and RPQ. K-hop is the query in which
we are given a source vertex $s$ and output all reachable vertices from $s$ that are at a distance (in terms of hops) of $\le k$ for a given
$k$. Each one of these queries can interact with different parts of our graph model. 
The edge properties that a recursive query needs to access and the vertex states for this computation will be clear from context. 

In a dynamic graph setting, an initial input graph $G_0$ may receive several batches of updates. Each batch is defined as a list of edge insertions or deletions $\delta E=[(u,v, label, weight, +/-)]$, which includes an edge, 
and its $label$~/~$weight$,
and a +/- to indicate, respectively, an insertion or a deletion (updates appear as one deletion and one insertion). 
We do not consider vertex insertions or deletions because these implicitly occur in our algorithms through explicit edge insertions and deletions.
$G_k$ refers 
to the actual set of edges in a graph $G$ after $G$ receives its $k$'th batch of updates $\delta E_k$ (so the union of $G_0$ and the $k$ batches of updates). 

The problem of incremental maintenance of a recursive query $Q$ is to 
report the changes to the output vertex
states of $Q$ after every batch of updates. These batches can be thought of as output in the form of $(v, state(v), +/-)$,
for a vertex $v$ and a new vertex state $state(v)$ and +/- indicating addition or removal of a state. 

\vspace{-10pt}
\subsection{Iterative Frontier Expansion as a Dataflow}
\label{sec:coreIFE}
Iterative Frontier Expansion (IFE) is a standard subroutine for implementing many graph algorithms solving many computational problems, including graph traversal queries like SPSP, SSSP, RPQs.  
At a high-level, the computation takes as input the edges (possibly with properties) of a graph $G$ and an initial set of vertex states, and, iteratively, aggregates for each vertex the states of its neighbours to compute a new vertex state, and propagates this state to its neighbours.
These iterations continue until some stopping criterion is met, e.g., a fixed point is reached and the vertex states converge. Figure~\ref{fig:ife-template} shows the template IFE dataflow that consists of two operators, \texttt{ExpandFrontier}, that expands the frontiers and the \texttt{Stop} operator that determines when to stop the query execution.

We use and optimize variants of this basic IFE dataflow to evaluate the queries we consider. As an example,
Figure~\ref{fig:sssp-as-ife} shows a specific instance of the IFE dataflow implementing the standard Belman-Ford algorithm for evaluating an SSSP query where vertex states are latest distances from a source vertex $s$. 
\texttt{ExpandFrontier} operator is implemented with two operators, \texttt{Join} and \texttt{Min}.
For each vertex $v$ in the frontier, \texttt{Join} sends possible new distances to $v$'s outgoing neighbours (considering $v$'s latest distance and possible weights on the edges). For each vertex $u$ of $v$'s outgoing neighbours', the new value is computed with a \texttt{Min} operator that computes the smallest received distance for $u$ considering $u$'s latest known distance. 
For different variants of shortest-path queries, RPQs, and variable-length join queries, we use the IFE template dataflow 
with always the same \texttt{Join} operator, but possibly different aggregator implementations and different \texttt{Stop} conditions. 

\begin{figure}
	\centering
	\begin{subfigure}{0.9\linewidth}
		\caption{High-level IFE dataflow.}
		\includegraphics[height=0.65in]{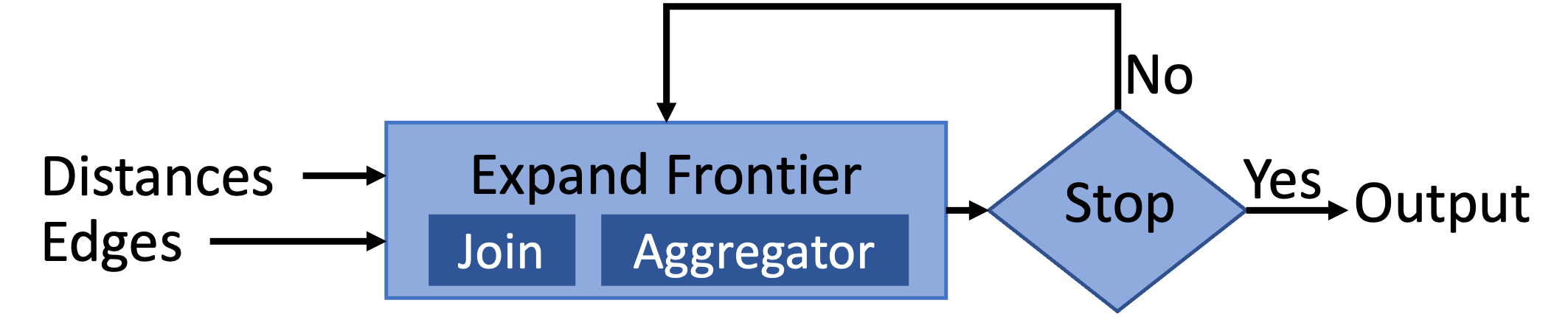}
		\label{fig:ife-template}
	\end{subfigure}%
	\vspace{-10pt}
	\newline
	\begin{subfigure} {0.9\linewidth}
		\caption{Specific IFE dataflow for SSSP}
		\includegraphics[height=0.7in]{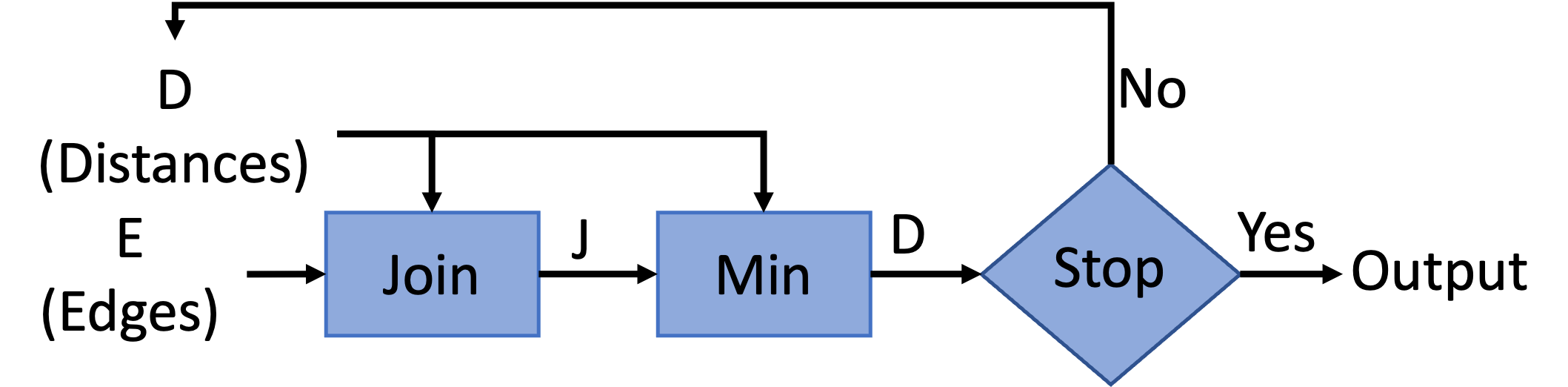}
		\label{fig:sssp-as-ife}
	\end{subfigure}
	\vspace{-10pt}

	\caption{Template IFE dataflow and a specific Bellman-Ford algorithm's dataflow implementation.}
	\label{fig:dataflow}
	\vspace{-10pt}
\end{figure}

\vspace{-7pt}
\subsection{Differential Computation Overview}
\label{sec:diff-overview}

DC~\cite{mcsherry:ddf} is a general technique to maintain the outputs of arbitrarily nested dataflow programs as the base input collections change.
Dataflow programs consist of operators, such
as \texttt{Join} or \texttt{Min} in Figure~\ref{fig:sssp-as-ife},
that take input and produce output
{\em data collections}, which are 
tables storing tuples. 
For example, in
the IFE dataflow,
the edges in an input graph are stored as (\texttt{src}, \texttt{dst}) tuples in the \texttt{Edges} ($E$) data collection.
We will refer to collections, such as \texttt{E},
that are inputs to the dataflow as {\em base collections}, and other collections
that are outputs of an operator  as {\em intermediate collections}.

We review DC through an example. Consider the IFE instance from Figure~\ref{fig:sssp-as-ife} implementing the Bellman-Ford algorithm and running it on the input graph shown in Figure~\ref{fig:running-ex-input}. 
Given this iterative dataflow computation, DC computes the input and output data collections of each operator as {\em partially ordered timestamped difference sets} and maintains these difference sets as the original input collections to the entire dataflow (in this case \texttt{Edges} ($E$) and \texttt{Distances} ($D$)) change. 
Timestamps can be multi-dimensional.
For example, in the above computation, the timestamps are two dimensional, the first is \emph{graph-version} and the second is \emph{Bellman-Ford iteration}, which we will refer to it as \emph{IFE iteration}, represented as a $\langle G_k, i\rangle$ pair. Collections, e.g., $D$, can change for two separate reasons: (1) changes in the graph ($E$), such as inserting an edge, or (2) changes in distances ($D$) during the computation of IFE iterations. 

More generally, for each data collection $C$, let $C_t$ represent the contents of $C$ at a particular timestamp $t$, and let $\delta C_t$ be the {\em difference set} that stores the ``difference tuples'' (differences for short) for $C$ at $t$. Differences
are extended tuples with + or - multiplicities. For base data collections, such as $E$,
+/- indicate external insertions or deletions to them. For intermediate
data collections,
these may not have as clear an interpretation.
Instead, the + or -'s are assigned to tuples to ensure
that summing all the $\delta C_t$ 
prior to a particular timestamp $t$ gives exactly $C_t$.
Sum of two difference sets adds the multiplicities 
for the differences with the same tuple values and if a sum equals 0, then the tuple is removed 
from the collection. 
Consider an operator with one input and one output collections, $I$ and $O$, respectively. 
DC ensures that for each collection and operator the following equations hold: 

\begin{equation}
I_t = \sum_{s \le t} \delta I_s ~\Rightarrow~  \delta I_t = I_t -  \sum_{s < t} \delta I_s
\label{eq1}
\end{equation}

\begin{equation}
O_t = Op(\sum_{s \le t} \delta I_s) ~\Rightarrow~ \delta O_t = Op(\sum_{s \le t} \delta I_s) - \sum_{s < t} \delta O_s
\label{eq2}
\end{equation}

DC uses Equations~\ref{eq1} and ~\ref{eq2} to compute the differences
to store in $\delta I_t$ and $\delta O_t$ for each timestamp.
Then, DC uses these difference sets to {\em reassemble} correct contents of $I_t$
and $O_t$ at each timestamp when needed during its maintenance procedure (explained momentarily).

\begin{figure}
	\centering
	\includegraphics[width=0.3\textwidth]{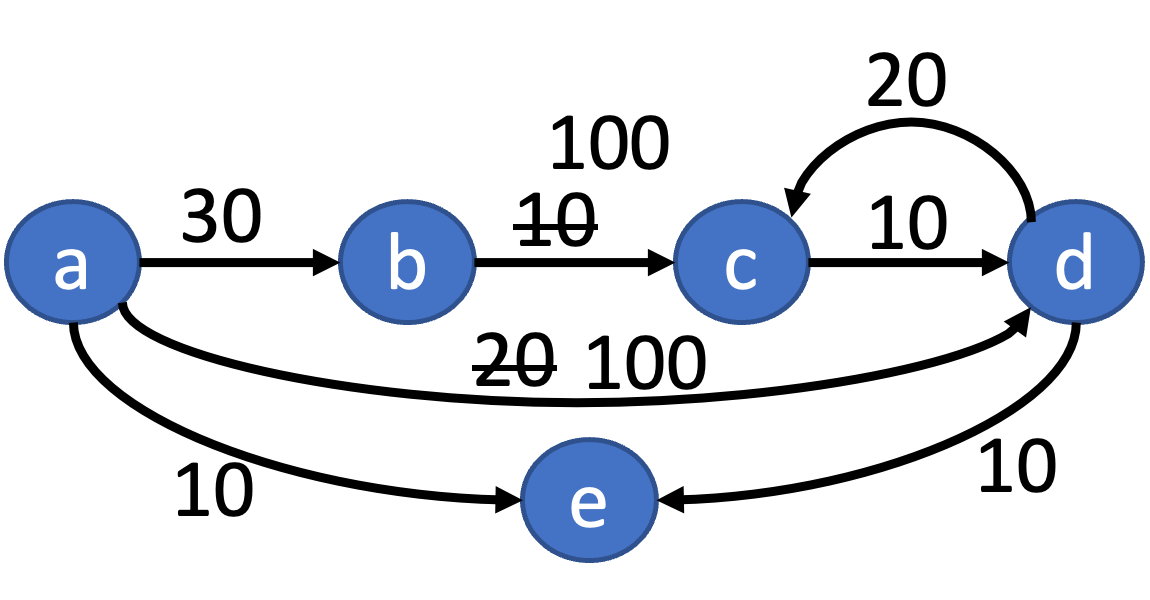}
	\vspace{-10pt}
	\caption{A dynamic graph with two updates: (i) $a$$\rightarrow$$d$ from 20 to 100 in $G_1$; and (ii) $b$$\rightarrow$$c$ changes from 10 to 100 in $G_2$.}
	\label{fig:running-ex-input}
	\vspace{-20pt}
\end{figure}

Suppose a system has maintained the Bellman-Ford dataflow differentially for $k$ many updates to its base collection $E$; that is, the system has computed the differences for each base or intermediate collection for timestamps $\langle G_0, 0 \rangle, \ldots,$  $\langle G_k, max \rangle$, where $max$ is the maximum number of iterations that the dataflow ran on any of $G_0, ..., G_k$. Given a new, $k+1$'st set of updates to the base collections, DC maintains the dataflow's computation by computing a new set of differences for collections at some of the timestamps $t = \langle G_{k+1}, i \rangle~|~i \in \{0...max\}$ by rerunning some of the operators at these timestamps. 
If on $G_{k+1}$, the Bellman-Ford dataflow computation requires more than $max$ iterations to converge, 
then the system generates difference sets for timestamps $\langle G_{k+1}, i\rangle~|~i > max$.

We next explain DC's maintenance procedure. Suppose that the operators work on partitions of collections.
In our example, the partitioning of the collections would be by vertex IDs and each operator would perform some computation per a vertex ID. 
Let $C^v_t$ indicate the contents of $C_t$'s partition for key $v$.
DC reruns an operator $Op$ at different timestamp $\tau$ according two rules:

\begin{squishedlist}
\item {\em Direct rerunning rule: }
if $Op$'s input $I$ has a difference at $\tau$ for a particular key $v$, i.e., $\delta I^v_{\tau}$ is non-empty, 
DC reruns $Op$ (on key $v$) at timestamp $\tau$. That is DC reassembles
$I^v_{\tau} = \sum_{t \le \tau} \delta I^v_{t}$ and executes $Op$ on $I^v_{\tau}$, which computes
a new $O^v_{\tau}$. Then, DC computes the difference set $\delta O^v_{\tau}$ as $\delta O^v_{\tau}=O^v_{\tau} - \sum_{t < \tau} \delta O^v_{t}$. 

\item {\em Upper bound rule:} 
For correctness, $Op$ may need to be executed on later timestamps than $\tau$ for $v$ even
if there is no immediate differences in $I$ at those timestamps. Specifically,
DC finds every timestamps $t_f \nless \tau$ in which $Op$'s input has differences for key $v$ 
and reruns $Op$ on timestamps that are least upper bounds of such $t_f$ and $\tau$. 
\end{squishedlist}

Importantly, if no difference is detected to vertex $v$'s partitions of inputs of an operator for timestamps from $\langle G_{k+1}, 0 \rangle$ to $\langle G_{k+1}, max \rangle$, no operator needs to rerun on $v$. For many dataflow computations, the effects of many updates in graphs can be localized to small neighbourhoods, and DC automatically detects the vertices in this neighbourhood on which operators need to rerun. 
\iftechreport
As an example, Table~\ref{fullTrace} shows 
 the full difference trace for each collection in the IFE
dataflow implementing Bellman-Ford algorithm in the example dynamic graph in Figure~\ref{fig:running-ex-input} that has 
two updates: (i) an update on (a, d) from 20 to 100, at timestamp $<$$G_1, 0$$>$; and (ii) an update on (b, c) edge from
10 to 100 at timestamp $<$$G_2, 0$$>$. These updates are modeled as differences in collection $E$ at these timestamps.
\else
As an example, Table~\ref{fullTrace} shows 
 the full difference trace for each collection in the IFE
dataflow implementing Bellman-Ford algorithm in the example dynamic graph in Figure~\ref{fig:running-ex-input} that has two updates: (i) an update on (a, d) from 20 to 100, at timestamp $<$$G_1, 0$$>$; and (ii) an update on (b, c) edge from
10 to 100 at timestamp $<$$G_2, 0$$>$. These updates are modeled as differences in collection $E$ at these timestamps,
which are omitted in the figure and can be found in the longer version of our paper~\cite{ammar:diff-tr}. 
\fi
Reference~\cite{foundations} formally proves that applying this simple rule to decide which operators to rerun correctly maintains any dataflow computation.

\iftechreport
\begin{table}
    \centering
    \caption{Full trace of differences in the SPSP example from Figure~\ref{fig:running-ex-input}.}
    \setlength\tabcolsep{1.0pt}
    \setlength\extrarowheight{1pt}
    \small
    \begin{tabular}{@{\hskip-8pt}c|@{}c@{}|c|L{3.5cm}|L{1.5cm}|L{1.7cm}|} 
    \multicolumn{6}{c}{
    \begin{tikzpicture}
    \bfseries
    \draw[->, line width=0.2mm] (-2,0) --node  [above] {\small{Graph Updates}} (2,0);
    \end{tikzpicture}
    } \\\cline{2-6}
    &&&\centering{\textbf{$G_0$}} &  \centering{\textbf{$G_1$}} & \multicolumn{1}{|>{\centering\arraybackslash}m{1.5cm}|}{\textbf{$G_2$}} \\\cline{2-6}
    \multirow{12}{*}{
    \vspace{-2.5cm}
    \begin{tikzpicture}
    \bfseries
    \centering
    \draw[<-, line width=0.2mm] (0,-2) --node [midway, above, sloped] {\small{\IFE\ iterations}} (0,2);
    \end{tikzpicture}
    }
    &\multirow{3}{*}{\textbf{0}} & \textcolor{red}{\textbf{$\delta E$}} 
                        &$+(a, b, 30),$ $+ (b, c, 10)$,
                        $+(c, d, 10), +(a, d, 20)$,
                        $+(d, e, 10), +(a, e, 10)$,
                        $+(d, c, 20)$
                        &\textcolor{red}{$-(a, d, 20)$}, $+(a, d, 100)$ 
                        &\textcolor{red}{$-(b, c, 10)$}, 
                        $+(b, c, 100)$ \\\cline{3-6}
                    &   & \textcolor{blue}{$\delta J$} & $+(a,0), + (b, \infty),+(c, \infty),$ $+(d, \infty),+(e, \infty)$& $\varnothing$& $\varnothing$ \\\cline{3-6}
                    &  &\textcolor{teal}{\textbf{$\delta D$}} & $+(a,0), + (b, \infty),+(c, \infty),$ $+(d, \infty),+(e, \infty)$& $\varnothing$& $\varnothing$ \\
                    \cline{2-6}
                    \cline{2-6}
    &\multirow{3}{*}{\textbf{1}} & \textcolor{red}{\textbf{$\delta E$}} 
                        & $\varnothing$
                        &$\varnothing$ 
                        &$\varnothing$ \\\cline{3-6}
                        & & \textcolor{blue}{$\delta J$} 
                        & $+(b, 30)$, $+(d, 20)$, $+(e, 10)$
                        & \textcolor{red}{$-(d, 20)$}, $+(d, 100)$
                        & $\varnothing$ \\\cline{3-6}
                        & &\textcolor{teal}{\textbf{$\delta D$}} 
                        & \textcolor{red}{$-(b, \infty)$}$, +(b, 30)$,
                        \textcolor{red}{$-(d, \infty)$}, $+(d, 10)$,
                        \textcolor{red}{$-(e, \infty)$}, $+(e, 10)$
                        & \textcolor{red}{$-(d, 20)$}, $+(d, 100)$
                        & $\varnothing$ \\
                        \cline{2-6} 
                        \cline{2-6}
    &\multirow{3}{*}{\textbf{2}} & \textcolor{red}{\textbf{$\delta E$}} 
                        & $\varnothing$
                        & $\varnothing$ 
                        & $\varnothing$ \\\cline{3-6}
                        & & \textcolor{blue}{$\delta J$} 
                        & $+(c, 40)$, $+(c, 40)$, $+(e, 30)$
                        & \textcolor{red}{$-(c, 40)$}, $+(c, 120)$,
                        \textcolor{red}{$-(e, 30)$}, $+(e, 110)$
                        & \textcolor{red}{$-(c, 40)$}, $+(c, 130)$ \\\cline{3-6}
                        & &\textcolor{teal}{\textbf{$\delta D$}} 
                        & \textcolor{red}{$-(c, \infty)$}, $+(c, 40)$
                        & $\varnothing$
                        & \textcolor{red}{$-(c, 40)$}, $+(c, 120)$ \\
                        \cline{2-6} 
                        \cline{2-6}
    &\multirow{3}{*}{\textbf{3}} & \textcolor{red}{\textbf{$\delta E$}} 
                        & $\varnothing$
                        &$\varnothing$ 
                        &$\varnothing$ \\\cline{3-6}
                        & & \textcolor{blue}{$\delta J$} 
                        & $+(d, 50)$
                        & $\varnothing$
                        & \textcolor{red}{$-(d, 50)$}, $+(d, 130)$ \\\cline{3-6}
                        & &\textcolor{teal}{\textbf{$\delta D$}} 
                        & $\varnothing$ & \textcolor{red}{$-(d, 100)$}, $+(d, 50)$ & \textcolor{red}{$-(d, 50)$}, $+(d, 100)$ \\
                        \cline{2-6}
                        \cline{2-6}
    &\multirow{3}{*}{\textbf{4}} & \textcolor{red}{\textbf{$\delta E$}} 
                        & $\varnothing$
                        &$\varnothing$ 
                        &$\varnothing$ \\\cline{3-6}
                        & & \textcolor{blue}{$\delta J$} 
                        & $\varnothing$
                        & \textcolor{red}{$-(c, 120)$}, $+(c, 70)$,
                        \textcolor{red}{$-(e, 110)$}, $+(e, 60)$
                        & \textcolor{red}{$-(c, 70)$}, $+(c, 120)$,
                        \textcolor{red}{$-(e, 60)$}, $+(e, 110)$ \\\cline{3-6}
                        & &\textcolor{teal}{\textbf{$\delta D$}} 
                        & $\varnothing$ & $\varnothing$ & $\varnothing$ \\\cline{2-6}
    \end{tabular}
    \label{fullTrace}
    \end{table}
\else
\begin{table}
    \centering
    \caption{Differences in our running example (excluding $\delta E$).}%
    \setlength\tabcolsep{1.0pt}
    \setlength\extrarowheight{1pt}
    \small
    \begin{tabular}{@{\hskip-8pt}c|@{}c@{}|c|L{3.5cm}|L{2.2cm}|L{1.2cm}|} 
    \multicolumn{6}{c}{
    \begin{tikzpicture}
    \bfseries
    \draw[->, line width=0.2mm] (-2,0) --node  [above] {\small{Graph Updates}} (2,0);
    \end{tikzpicture}
    } \\\cline{2-6}
    &&&\centering{\textbf{$G_0$}} &  \centering{\textbf{$G_1$}} & \multicolumn{1}{|>{\centering\arraybackslash}m{1.5cm}|}{\textbf{$G_2$}} \\\cline{2-6}
    \multirow{12}{*}{
    \vspace{-2.5cm}
    \begin{tikzpicture}
    \bfseries
    \centering
    \draw[<-, line width=0.2mm] (0,-2) --node [midway, above, sloped] {\small{\IFE\ iterations}} (0,2);
    \end{tikzpicture}
    }
    &\multirow{2}{*}{\textbf{0}} & \textcolor{blue}{$\delta J$} & $+(a,0), + (b, \infty),+(c, \infty),$ $+(d, \infty),+(e, \infty)$& $\varnothing$& $\varnothing$ \\\cline{3-6}
                    &  &\textcolor{teal}{\textbf{$\delta D$}} & $+(a,0), + (b, \infty),+(c, \infty),$ $+(d, \infty),+(e, \infty)$& $\varnothing$& $\varnothing$ \\
                    \cline{2-6}
                    \cline{2-6}
    &\multirow{2}{*}{\textbf{1}} & \textcolor{blue}{$\delta J$} 
                        & $+(b, 30)$, $+(d, 20)$, $+(e, 10)$
                        & \textcolor{red}{$-(d, 20)$}, $+(d, 100)$
                        & $\varnothing$ \\\cline{3-6}
                        & &\textcolor{teal}{\textbf{$\delta D$}} 
                        & \textcolor{red}{$-(b, \infty)$}$, +(b, 30)$,
                        \textcolor{red}{$-(d, \infty)$}, $+(d, 10)$,
                        \textcolor{red}{$-(e, \infty)$}, $+(e, 10)$
                        & \textcolor{red}{$-(d, 20)$}, $+(d, 100)$
                        & $\varnothing$ \\
                        \cline{2-6} 
                        \cline{2-6}
    &\multirow{2}{*}{\textbf{2}} & \textcolor{blue}{$\delta J$} 
                        & $+(c, 40)$, $+(c, 40)$, $+(e, 30)$
                        & \textcolor{red}{$-(c, 40)$}, $+(c, 120)$,
                        \textcolor{red}{$-(e, 30)$}, $+(e, 110)$
                        & \textcolor{red}{$-(c, 40)$}, $+(c, 130)$ \\\cline{3-6}
                        & &\textcolor{teal}{\textbf{$\delta D$}} 
                        & \textcolor{red}{$-(c, \infty)$}, $+(c, 40)$
                        & $\varnothing$
                        & \textcolor{red}{$-(c, 40)$}, $+(c, 120)$ \\
                        \cline{2-6} 
                        \cline{2-6}
    &\multirow{2}{*}{\textbf{3}} & \textcolor{blue}{$\delta J$} 
                        & $+(d, 50)$
                        & $\varnothing$
                        & \textcolor{red}{$-(d, 50)$}, $+(d, 130)$ \\\cline{3-6}
                        & &\textcolor{teal}{\textbf{$\delta D$}} 
                        & $\varnothing$ & \textcolor{red}{$-(d, 100)$}, $+(d, 50)$ & \textcolor{red}{$-(d, 50)$}, $+(d, 100)$ \\
                        \cline{2-6}
                        \cline{2-6}
    &\multirow{2}{*}{\textbf{4}} & \textcolor{blue}{$\delta J$} 
                        & $\varnothing$
                        & \textcolor{red}{$-(c, 120)$}, $+(c, 70)$,
                        \textcolor{red}{$-(e, 110)$}, $+(e, 60)$
                        & \textcolor{red}{$-(c, 70)$}, $+(c, 120)$,
                        \textcolor{red}{$-(e, 60)$}, $+(e, 110)$ \\\cline{3-6}
                        & &\textcolor{teal}{\textbf{$\delta D$}} 
                        & $\varnothing$ & $\varnothing$ & $\varnothing$ \\\cline{2-6}
    \end{tabular}
    \label{fullTrace}
    \vspace{-25pt}
    \end{table}
\fi

\section{Complete Difference Dropping: Join-On-Demand}
\label{sec:complete-drop}

When maintaining IFE with DC, the memory overheads of storing the difference sets for the 
output of the \texttt{Join} operator (\texttt{J})  is generally much larger 
than those for  the output of the following aggregation operator (\texttt{D}). 
Consider the IFE implementation of SPSP, where edges have weights and vertex states represent shortest distances to a source vertex. Suppose at a particular iteration $i$ of the IFE at a specific graph version $G_k$, 
a vertex $v$'s state is $(v, d_v)$ and $v$ has $deg(v)$ many
outgoing edges, e.g., $(u_1, w_1), \ldots (u_{deg(v)}, w_{deg(v)})$. Then to simulate $v$ 
propagating possible new shortest distances to its outgoing neighbours, \texttt{J}
would contain $deg(v)$ many tuples at timestamp $\langle G_k, i \rangle$:
$(u_1, d_v + w_1)$, ..., $(u_{deg(v)}, d_v + w_{deg(v)})$. 
Similarly, the partition \texttt{J}$^u$ of \texttt{J} contains
one tuple for each of $u$'s incoming neighbours.
When maintaining \IFE\ differentially, 
\texttt{J}'s size is commensurate with the number of edges
in $G$, which can be much larger than \texttt{D}, whose size is
commensurate with the number of vertices in $G$.

\begin{example}
\label{ex:deltaj}
Observe that in Table~\ref{fullTrace}, $\delta D$ has two differences  for vertex $d$ at timestamp $\langle G_1,1 \rangle$, 
$-(d, 20)$ and $+(d, 100)$. These changes  lead to four differences in $\delta J$ 
because $d$ has two outgoing edges, one to $c$ and the other to $e$.

\end{example}

The goal of \JOD\ is to avoid storing any difference sets  for \texttt{J}, i.e., to completely drop
$\delta J$, and regenerate $J^u$ for any $u$ on demand 
when DC requires running the aggregation operator (in our example \texttt{Min}) on $u$ at a particular
timestamp. We first describe an unoptimized
version of \JOD, then describe an optimization called {\em eager merging} that reduces
the timestamps to regenerate $J^u$, which is the optimized \JOD\ we 
have implemented.

\subsection{\JOD}
\label{subsec:jod}
Recall that DC reruns \texttt{Min} 
on a vertex $u$ at timestamp $t = \langle G_{k+1}, i \rangle$ 
if (1) $\delta D^u_{t}$ or
$\delta J^u_{t}$ are non-empty (direct rule);
or (2) $t$ is an upper bound of
$\tau_1$ and $\tau_2$ that satisfy the following conditions (upper bound rule):
(i) $\tau_1 \in \Tau_1 = \{ \langle G_{k+1}, i' \rangle |  i' < i \}$ and 
$\delta D^u_{\tau_1}$ and/or $\delta J^u_{\tau_1}$ are non-empty; and (ii)
$\tau_2 \in \Tau_2 = \{ \langle G_{k'}, i \rangle | k' < k+1\}$ and 
$\delta D^u_{\tau_2}$ and/or $\delta J^u_{\tau_2}$ are non-empty.
If
 $\delta J$ are dropped, how can we correctly
 decide when to rerun \texttt{Min} and recompute the needed dropped $\delta J$ 
 for these reruns to ensure we correctly differentially maintain IFE?
 DC$^{\text{JOD}}$ is our modified version of \DC\ maintenance
subroutine that has this guarantee, which works as follows.
In the below description, when \texttt{Min} is rerun on $u$ at timestamp $t$,    $J^v_t$ is constructed 
by inspecting for each incoming neighbour $w$ of $u$, 
 $D^w_{t}$ and $E^w_t$ and performing the join. 
 Note that we do not drop the differences related to \texttt{D} and \texttt{E}. 

\noindent DC$^{\text{JOD}}$:
\begin{squishedlist}
\item \em{$\delta E$ Direct Rule}: For each $(u, v, l, p, +/-) \in \delta E_{k+1}$, since there is a difference 
in $\delta E^u_{\langle G_{k+1}, 0 \rangle}$, there is also a difference
in $\delta J^v_{\langle G_{k+1}, 0 \rangle}$. So
we rerun \texttt{Min} on $v$ in $\langle G_{k+1}, 0 \rangle$ (direct rule). 
 
\item \em{$\delta D$ Direct Rule}: Each time \texttt{Min} reruns on $u$ at a timestamp $\langle G_{k+1}, i\rangle$,
we check if it generates a difference for $\delta D^u_{\langle G_{k+1}, i+1\rangle}$.
If so, this implies there is a 
difference in $\delta J^v_{\langle G_{k+1}, i+1\rangle}$ for each outgoing 
neighbour $v$ of $u$. Therefore we schedule \texttt{Min} on $v$ at timestamp
$\langle G_{k+1}, i+1\rangle$ (direct rule). 

\item {\em Upper Bound Rule:}
Each time we schedule to rerun \texttt{Min} on a vertex $v$, either by  
$\delta E$ or $\delta d$ Direct Rule at timestamp $\langle G_{k+1}, i+1\rangle$,
by the upper bound rule, we schedule to
rerun \texttt{Min} on $v$ at timestamp $\langle G_{k+1}, j\rangle$ s.t. $j > i+1$ if either 
of these two conditions are satisfied: (i) there is a non-empty
$\delta D^v_{\langle G_h, j \rangle}$ s.t $h < k + 1$; and (ii) there is
an incoming neighbour $w$ of $v$ with a non-empty $\delta D^w_{\langle G_h, j \rangle}$ s.t., 
$h < k +1$.
\end{squishedlist}

\iftechreport
Our next theorem proves that DC$^{\text{JOD}}$ correctly
maintains the IFE dataflow.

\begin{theorem}
\label{thm:dc-jod}
The subset of timestamps
that DC$^{\text{JOD}}$ re-computes \texttt{Min} on any key/vertex ID 
subsumes the timestamps that DC re-computes \texttt{Min} and 
correctly generates the same set of differences for \texttt{D}.
\end{theorem}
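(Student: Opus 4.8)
The plan is to prove the two assertions of the theorem—subsumption of the rerun timestamps and equality of the generated $\delta D$ differences—together, by induction on the IFE iteration $i$ within the new graph version $G_{k+1}$, under the outer inductive assumption that DC and DC$^{\text{JOD}}$ have already produced identical $\delta D$ and $\delta E$ difference sets for every timestamp $\langle G_h, \cdot\rangle$ with $h \le k$. The central tool I would isolate first is a \emph{Join characterization lemma}: since $J = \texttt{Join}(D, E)$ is a deterministic, stateless function of its inputs keyed on the endpoint vertex, a non-empty $\delta J^v_t$ forces a non-empty difference in one of \texttt{Join}'s inputs—either an edge incident to $v$ changed in $\delta E$, or some in-neighbour $w$ of $v$ has a non-empty $\delta D^w$ at the point \texttt{Join} maps to $t$. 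Conversely, DC$^{\text{JOD}}$ may schedule a rerun from any such input difference even when cancellation leaves $\delta J^v_t$ empty; this only over-approximates the rerun set and is therefore consistent with the subsumption claim. This lemma is exactly what lets DC$^{\text{JOD}}$ recover, from the retained collections $D$ and $E$ alone, the information that the original DC reads off the materialized $\delta J$.

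Given the lemma, the subsumption direction proceeds by matching each trigger of the original DC to a rule of DC$^{\text{JOD}}$. If DC reruns \texttt{Min} on $v$ at $\langle G_{k+1}, i\rangle$ by the direct rule because $\delta J^v$ is non-empty, the lemma attributes this either to an edge change—covered by the $\delta E$ Direct Rule—or to a non-empty $\delta D^w$ of an in-neighbour produced earlier in the same run, which by the induction hypothesis is already computed and hence fires the $\delta D$ Direct Rule scheduling \texttt{Min} on $v$; the trigger $\delta D^v \neq \emptyset$ is itself an \emph{output} of \texttt{Min} on $v$, so it is reached along the feedback edge by the same $\delta D$ Direct Rule. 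If DC instead reruns by the upper bound rule at a least upper bound of $\tau_1$ and $\tau_2 = \langle G_h, i\rangle$ with $h < k+1$, I map the two disjuncts of the $\tau_2$ condition onto the two clauses of DC$^{\text{JOD}}$'s Upper Bound Rule: clause (i) captures a stored non-empty $\delta D^v_{\tau_2}$ directly, while clause (ii) captures a non-empty $\delta J^v_{\tau_2}$ by invoking the Join characterization lemma to translate it into a stored non-empty $\delta D^w_{\tau_2}$ of an in-neighbour. Because $D$ is never dropped, every quantity these clauses inspect is available.

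For equality of $\delta D$, I would argue that whenever DC$^{\text{JOD}}$ reruns \texttt{Min} on $v$ at $t$, its on-demand reconstruction $J^v_t = \bigcup_w \texttt{Join}(D^w_t, E^w_t)$ reproduces the true $J^v_t$ exactly, since $D$ and $E$ are maintained identically to DC and \texttt{Join} is deterministic; feeding this into the deterministic \texttt{Min} yields the same $D^v_t$, and subtracting the correctly accumulated $\sum_{s<t}\delta D^v_s$ yields the same $\delta D^v_t$. The same reconstruction argument shows the \emph{extra} reruns permitted by strict subsumption are harmless: at such a vertex/timestamp the reconstructed input is unchanged from the previously assembled value, so \texttt{Min} recomputes the identical output and emits an empty difference. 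Closing the induction then gives both claims at once.

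I expect the main obstacle to be getting the iteration bookkeeping right around the feedback edge of the IFE loop: the $\delta D$ Direct Rule observes a difference that \texttt{Min} on $u$ \emph{produces} and must immediately conclude the existence of a $\delta J^v$ at the corresponding iteration for every out-neighbour $v$, so the induction hypothesis must be phrased so that all $\delta D$ differences feeding $J^v_{\langle G_{k+1}, i\rangle}$ are settled before reruns at iteration $i$ are decided. Making the order of ``schedule, rerun, observe, reschedule'' consistent with the strict iteration order, and verifying that no $\delta J$ difference required by DC can arise from a $\delta D$ difference that DC$^{\text{JOD}}$ has not yet produced, is the delicate part; the Join characterization lemma is precisely what averts a circular dependency, since $J$ at iteration $i$ depends only on $D$ at strictly earlier points of the loop.
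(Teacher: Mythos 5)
Your proposal is correct and follows essentially the same route as the paper's proof: the same nested induction over graph versions and IFE iterations, with your ``Join characterization lemma'' playing exactly the role of the paper's innermost (third) induction, which shows that empty in-neighbour $\delta D$ differences force empty $\delta J$ differences --- i.e., your lemma in contrapositive form --- and the same observation that the extra reruns permitted by subsumption emit empty differences and are therefore harmless. The only cosmetic difference is that you argue the subsumption direction directly (a DC rerun implies a DC$^{\text{JOD}}$ rerun) while the paper proves the contrapositive (no DC$^{\text{JOD}}$ rerun implies no DC rerun).
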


\begin{proof}
Our proof is very technical and by induction on timestamps $t$. We assume
for simplicity that there is a global $max$ iterations 
on which the \IFE\ computations runs. 
We take 
as the base cases $\langle G_0, 0 \rangle$
to $\langle G_0, max \rangle$, as in this case the behaviour
of  DC$^{\text{JOD}}$ simply follows $DC$, which 
further directly follows the computation that is performed when 
running static version of \IFE\ on an input graph. In this case the behaviour
is that \texttt{Min} runs on all vertices in $\langle G_0,0\rangle$ and
then for each vertex $v$ at $\langle G_0,0\rangle$ if one of its incoming
neighbour's states change. As induction hypothesis, we assume
that for each vertex $v$, DC$^{\text{JOD}}$ runs \texttt{Min} on a larger subset of
timestamps and correctly generates the same set of differences 
for the data collection \texttt{D} (and \texttt{E}, whose maintenance is independent
of the \JOD\ optimization) until $\langle G_k, max\rangle$. We will prove
that for each key $v$, if \DC\ runs \texttt{Min} in timestamp $\langle G_{k+1}, i\rangle$,
so does DC$^{\text{JOD}}$. Note that for $t_0=\langle G_{k+1}, 0\rangle$,
this is true because if $v$ is re-computed by DC in $t_0$
it is because there is change in $\delta J^v_{t_0}$, which
can only occur if there is an edge $(u, v)$ in $\delta E_{t_0}$,
i.e., one of $v$'s incoming edges must have had an edge update that 
triggered \DC\ to rerun \texttt{Join} which must trigger a difference
in $\delta J^v_{t_0}$. The first step of  DC$^{\text{JOD}}$
ensures that for each edge (u, v) in $\delta E_{t_0}$, \texttt{Min}
is re-computed for $v$ in $t_0$. We use this as a base case, we do another 
induction only on the second component of the timestamps in $\langle G_{k+1}, i\rangle$.
That is we assume from $\langle G_{k+1}, 0\rangle$ to $\langle G_{k+1}, i\rangle$,
the theorem is true and prove it for $\langle G_{k+1}, i+1\rangle$.

Now consider $t_{i+1} = \langle G_{k+1}, i+1\rangle$. We will prove the contrapositive
of the claim: if DC$^{\text{JOD}}$ does not re-compute \texttt{Min} on $u$,
then neither will DC. Note that DC$^{\text{JOD}}$ does not run on $u$ if two conditions
are satisfied: (1) none of $u$'s incoming neighbours $v$ had
a non-empty $\delta D^u_{\langle G_{k+1}, i\rangle}$. If there was we would 
schedule $u$ to re-compute, which means \DC\ cannot execute 
\texttt{Min} on $v$ as an application of the direct rule. (2) none of 
$u$'s incoming neighbours $w_1$, ..., $w_2$ has
a non-empty $\delta D^w_{\langle G_h, j \rangle}$ s.t., 
$h < k +1$. What we next show is that if this condition is true,
then all such $\delta J^u_{\langle G_h, j \rangle}$
are non-empty, so \DC\ cannot have triggered the upper bound rule either.

Finally, to prove this we will do a third induction starting
from $\delta J^u_{\langle G_0, j \rangle}$ to
$\delta J^u_{\langle G_k, j \rangle}$. For the base case of 
$\delta J^u_{\langle G_0, j  \rangle}$, observe that:
 
$$\delta J^u_{\langle G_0, j \rangle} = J^u_{\langle G_0, j \rangle} - 
J^u_{\langle G_{0}, j-1 \rangle} = \emptyset$$
This is true because this is a 1 dimensional case and 
$J^u_{\langle G_0, j \rangle}$ and $J^u_{\langle G_0, j-1 \rangle}$
are computed buy the \texttt{Join}
operator using the same of incoming neighbour states (recall that our assumption
is that all $\delta D^w_{\langle G_0, j \rangle}$ are empty).
Suppose now that $\delta J^u_{\langle G_0, j \rangle}$ up to
$\delta J^u_{\langle G_h, j \rangle}$ are empty. We prove that 
$\delta J^u_{\langle G_{h+1}, j \rangle}$ is empty:

$$\delta J^u_{\langle G_{h+1}, j \rangle} = 
J^u_{\langle G_{h+1}, j \rangle} - J^u_{\langle G_{h+1}, j-1 \rangle} -
\sum_{\ell = 0, ..., h} \delta J^u_{\langle G_{\ell}, j \rangle}
 $$ 
 By induction hypothesis the last summation is non-empty.
 Note further that 
$J^u_{\langle G_{h+1}, j \rangle}$ and $J^u_{\langle G_{h+1}, j-1 \rangle}$
are the same set because they are computed by the \texttt{Join}
operator using the same of incoming neighbour states for $u$, 
as we are assuming that all $\delta D^w_{\langle G_h, j \rangle}$ are empty for 
each incoming neighbour $w$ of $u$, completing the proof.
\end{proof}

Note that there can be timestamps in which DC$^{\text{JOD}}$
unnecessarily re-compute \texttt{Min}, but by the correctness argument
of DC in reference~\cite{foundations}, any timestamp that DC avoids
rerunning a computation, is guaranteed to produce
empty differences, so these spurious re-computations cannot affect 
the correctness of DC$^{\text{JOD}}$, so as a corollary of Theorem~\ref{thm:dc-jod},
we can conclude that DC$^{\text{JOD}}$ correctly maintains the \IFE\ dataflow. 
A simple example of spurious rerun is the simple case in the SPSP example
is when a vertex $u$ has two incoming edges, say from $w_1$ to $w_2$,
where suppose for purpose of demonstration $w_1$ and $w_2$ start with states 0
initially (so at a timestamp $\langle G_0, 0\rangle$), and suppose further that
edges $(w_1, u)$ has a weight of 10 and $(w_2, u)$ has a weight of 20,
so the $J^u_{\langle G_0, 1 \rangle}$ would contain $(u, 10, +)$ and $(u, 20, +)$.
Suppose in $G_1$, the weights of these edges are swapped. The original DC
 would not re-compute  \texttt{Min} at timestamp $\langle G_1, 1\rangle$ on
$u$, because there is no difference directly to $u$'s input (nor through an upper bound rule), 
where as DC$^{\text{JOD}}$ would. This is because DC$^{\text{JOD}}$ would immediately
schedule \texttt{Min} to execute on $u$ because $w_1$ (or $w_2$'s) states
changed at  $\langle G_1, 0 \rangle$ and $u$ is an outgoing neighbour of $w_1$.

\else
In the longer version of our paper~\cite{ammar:diff-tr}, we prove inductively,
starting from $\langle G_{k+1}, 0\rangle$ 
to $\langle G_{k+1}, max\rangle$, that the above procedure
reruns \texttt{Min} on every vertex $v$ in the timestamps
that vanilla \DC\ would rerun and produces the correct differences
for \texttt{D}.
\fi

\begin{example}
\label{ex:jod}
We next demonstrate applications of JOD's rerunning rules on our running example.
Consider the first update in our running example at timestamp $\langle G_1, 0 \rangle$,
which updates the weight of edge (a, d) from 20 to 100. By the $\delta E$ Direct Rule of 
JOD, we rerun \texttt{Min} on $d$ at timestamp $\langle G_1, 0 \rangle$.
Further by JOD's Upper Bound Rule, we also schedule to run $d$
at timestamp  $\langle G_1, 2 \rangle$ because $\delta D^c_{\langle G_0, 2 \rangle}$
is non-empty and $c$ is an incoming neighbour of $d$ (condition (ii)).
Note that rerunning \texttt{Min} on $d$ at timestamp $\langle G_1, 0 \rangle$
 creates a difference for $\delta D^d_{\langle G_1, 1\rangle}$.
By the $\delta D$ Direct Rule, we further schedule to rerun \texttt{Mi}n on $c$ and $e$, which
are the outgoing neighbours of $d$, at timestamp 
$\langle G_1, 1\rangle$. 
\end{example}

\vspace{-10pt}
\subsection{Eager-Merging}
\label{subsec:em}
The naive implementation of \JOD\ can be expensive because  
the number of possible timestamps $\langle G_{h}, i \rangle$ to inspect, where $h < k+1$,
can grow unboundedly large as batches of edge updates continue to arrive. 
The eager merging optimization we describe next, which extends
a periodic merging optimization of the \DD\ system (explained momentarily), reduces
the number of these timestamps.

Consider the point
at which a new set of updates to graph version $G_{k}$ has
arrived and the system 
has finished maintaining the computation for $G_{k}$.
So there are $k \times max$ many different timestamps
in the computation so far. Let us think of these
timestamps in a 2D grid with columns as 
graph version indices and rows as IFE iterations 
as in Table~\ref{fullTrace}. 

Observe that as more updates arrive to the system,
the timestamps will increase in the 
graph version dimension to $G_{k+1}$, $G_{k+2}$, etc, so more columns will be added to this grid.
Consider reassembling the contents of some collection $C$ 
at timestamp $\langle G_{k+1}, 0\rangle$. To do so,
\DD\ has to sum the differences 
in $\delta C_{Row_0} = \{\delta C_{\langle G_0, 0\rangle}$, ..., 
$\delta C_{\langle G_{k}, 0\rangle}\}$. 
To reassemble $C$ at  timestamp $\langle G_{k+1}, 1\rangle$,
\DD\ has to sum the difference sets in $\delta C_{Row_0}$ 
and $\delta C_{Row_1} = \{\delta C_{\langle G_0, 1\rangle}$, ..., 
$\delta C_{\langle G_k, 1\rangle}\}$, etc.
Observe that once the (k+1)'st graph updates have arrived, the system
will never have to re-execute
an operator at timestamps 
$\langle G_h, i \rangle$ where $h < k+1$. 
Instead of computing 
$\delta C_{Row_j}$ multiple
times for each possible $\{ C_{\langle G_{k+1}, j\rangle} | j > i \}$,
the original \DD\ periodically unions the individual difference
sets in $\delta C_{Row_j}$ into a single difference set  $\delta C_{\langle G_{k+1}, j\rangle}$. 
This allows \DD\ to reassemble collections faster and store the difference
sets more compactly.

Instead of periodic merging, we eagerly merge the differences along the $\text{graph-version}$ dimension as we run DC's maintenance procedure for 
$\langle G_{k+1}, 0\rangle$ to $\langle G_{k+1}, max\rangle$.
That is, as soon as DC finishes maintaining $\langle G_{k+1}, i\rangle$,
we merge the difference sets for \texttt{D} for timestamps
$\langle G_{k}, i\rangle$ and $\langle G_{k+1}, i\rangle$.
This guarantees that for any vertex, we only need to keep one-dimensional timestamps, i.e., only for $\text{\em{IFE iteration}}$. 
Table~\ref{fig:earlyMerge} shows the states of the differences stored in the system with  
eagerly merging differences and the DC algorithm is in the process of maintaining the computation
at timestamp $\langle G_2, 2 \rangle$.
Differences at grey cells have been merged to the right most cell on the row.
 In presence of eager merging,
whenever \JOD\ needs to investigate if 
 $\delta D_{\langle G_h, i \rangle}~|~h < k +1$ is non-empty for any vertex,
 we only need to inspect timestamps with $h = k$.
  
We end this subsection with a discussion of another benefit of eager merging. Eager merging allows dropping all differences with negative multiplicities in  the difference sets for \texttt{D}. 
This is because in the algorithms we consider,
vertices take one unique state at each iteration of IFE. Therefore in one-dimensional timestamps, 
the change in the state of a vertex from $s$ to $s'$ at iteration $i$,
is always represented with two differences: (i) one with positive multiplicity with $s'$; and (ii) one with negative multiplicity for $s$. 
\iftechreport
For example, readers can check that once DC$^{JOD}$ with eager merging finishes maintaining
the computation for all timestamps for graph version $G_2$, 
the distances stored for vertex $d$ will be  
 \{$(1,100, +)$,$(3,100, -)$,$(3,50, +)$\}, where the first values in these tuples are the timestamps, which
 we now represent only with IFE iteration number.
These differences can be stored as  \{$(1,100)$, $(3,50)$\}, and the 
 $(3, 100, -)$ would be implied. 
 \fi
 In absence of negative multiplicities, we can also avoid
 doing any summations when computing the state of a vertex at timestamp $i$, i.e., $D^v_i$. Instead
 we can find the latest iteration $i^* \le i$ in which vertex $v$ has a (positive) difference
 and return it.


\begin{table}
    \centering
    \caption{Differences in \texttt{D} on our running example \textbf{with eager-merging} when maintaining the computation for  $\langle G_2, 2 \rangle$.}
    \vspace{-10pt}
    \setlength\tabcolsep{1.0pt}
    \setlength\extrarowheight{1pt}
    \small
    \begin{tabular}{@{\hskip-8pt}c|@{}c@{}|L{1cm}|L{2cm}|L{4cm}|} 
    \multicolumn{5}{c}{
    \begin{tikzpicture}
    \bfseries
    \draw[->, line width=0.2mm] (-2,0) --node  [above] {\small{Graph Updates}} (2,0);
    \end{tikzpicture}
    } 
    \\\cline{2-5}
    &&\centering{\textbf{$G_0$}} &  \centering{\textbf{$G_1$}} & \multicolumn{1}{|>{\centering\arraybackslash}m{1.5cm}|}{\textbf{$G_2$}} 
    \\\cline{2-5}
    \multirow{4}{*}{
    \begin{tikzpicture}
    \bfseries
    \centering
    \draw[<-, line width=0.2mm] (0,-1) --node [midway, above, sloped] {\small{IEF iterations}} (0,0);
    \end{tikzpicture}
    }
    &{\textbf{0}} &\cellcolor{gray!75}  & \cellcolor{gray!75}  & $+(a,0), +(b, \infty), +(c, \infty),$ $+(d, \infty), +(e, \infty)$ 
    \\\cline{3-5}
    &{\textbf{1}} & \cellcolor{gray!75} & \cellcolor{gray!75} &  $+(b, 30), +(d,100), +(e,10)$
    \\\cline{3-5}
    &{\textbf{2}} & \cellcolor{gray!75} & $+(c, 40)$ & \cellcolor{red!50} 
    \\\cline{3-5}
    &{\textbf{3}} & \cellcolor{gray!75} &  $+(d,50)$  & 
    \\\cline{3-5}
    \end{tabular}
    \vspace{4pt}
    \label{fig:earlyMerge}
    \vspace{-20pt}
    \end{table}

\section{Partial Difference Dropping}
\label{sec:partial}

We next investigate optimizations that partially drop the differences in $D$.
When we apply \JOD, $D$ is the only data collection 
for which we store differences, except for the original edges in the graph. 
Partial dropping the differences in $D$ allows trading off scalability with query performance.
Specifically, the memory overhead to store $D$ decreases, yet it also decreases performance 
because when DC needs to reassemble the contents of $D$ at a timestamp $t$,
the dropped differences need to be recomputed. In this section, we will describe optimizations 
with different scalability/performance tradeoffs.
Throughout this section we assume running DC$^{JOD}$ with eager merging and use
single dimensional timestamps to refer to data collections, such as $D_i$, instead of $D_{\langle G_k, i \rangle}$.

A partial dropping optimization has two key components:
\begin{squishedlist}
\item {\em Dropped Difference Maintenance}: When DC$^{JOD}$ accesses 
$D^v_i$, the system needs to identify if a difference was dropped with key/vertex ID $v$ at timestamp $i$. Therefore, the system needs to maintain the dropped vertex ID-timestamp pair information. 

\item  {\em Selecting the Differences to Drop:} The system also needs to decide which differences to drop and which ones to keep.
\end{squishedlist}
\noindent We describe alternative approaches to both components.

A third important decision is 
to choose how many differences to drop given a memory constraint. At a high-level, the answer to this question
is clear:  drop as little as possible without violating the memory constraint. In practice however estimating this 
amount may be challenging because each update to the graph changes the amount of differences needed 
to maintain registered queries. Further, a system needs to estimate and plan for
newly registered or deregistered queries. In such dynamic scenarios, systems can adopt
adaptive techniques that determine how many differences to drop from each query by observing 
the stored differences. This is a future topic for a  rigorous within study. Within the context 
of this paper, we will assume a user-define probability $p$ 
that drops each difference with probability $p$ (see Section~\ref{sec:selectingDiffs}). 

\vspace{-7pt}
\subsection{Dropped Difference Maintenance}
\label{sec:PartialCorrectness}

One natural approach to maintaining the dropped vertex ID-timesta-mp pairs (VT pairs for short) is to store them 
explicitly in a separate data structure \texttt{DroppedVT}. 
We  discuss two possible designs for this data structure. We first present a
straightforward deterministic data structure,
and discuss its scalability bottlenecks. Then, we propose a probabilistic data structure, which can address this
scalability bottleneck but possibly leading to spurious re-computations of undropped differences. 
In our evaluations, we show that, despite this possible performance disadvantage,
our probabilistic approach can still be more performant as it can drop fewer differences than our
deterministic approach under limited memory settings.

\subsubsection{Deterministic Difference Maintenance (\textsc{Det-Drop})}
\label{sec:optimization-det}
\textsc{Det-Drop} uses a hash table to implement
 \texttt{DroppedVT}. 
During $DC^{JOD}$, when $D^v_i$ is needed, we perform the following \texttt{AccessD$^v_i$WithDrops} procedure. Before 
we describe our procedure, recall from Section~\ref{subsec:em} that we do not store 
differences with negative multiplicities for $D$ when
we eagerly merge differences, so we do not need to do any summation to compute $D^v_i$. We only need 
to find and return the latest iteration $i^* \le i$ for which there is a difference for $v$.

\noindent \texttt{AccessD$^v_i$WithDrops}:
\begin{squishedenumerate}
\item[1.] Let $\delta D$ be the index that stores the difference sets for $D$. We check $\delta D$ 
for the latest iteration $g^* \le i$, if any, for which the system has stored a difference for $v$.
\item[2.] Check \texttt{DroppedVT} for the latest iteration $d^* \le i$, if any, for which the system has dropped a difference 
for vertex $v$.
\item[3.] If a $d^* > g^*$ exists, recompute the stored difference at $d^*$ and return this value. 
Otherwise, return the value at $\delta D$ at $g^*$.
\end{squishedenumerate}
\noindent 
Note that to recompute a dropped difference at timestamp $d^*$ in step $3$
we rerun the aggregation operation, e.g., \texttt{Min}, for vertex $v$ at 
iteration $d^*-1$. This procedure is similar to how we rerun \texttt{Min} operator
for vertices at different timestamps as part of the $DC^{JOD}$ procedure.
Then using $J^v_{d^*-1}$ and $D^v_{d^*-1}$ we rerun \texttt{Min} and compute $D^v_{d^*}$. However when
we access  $D^v_{d^*-1}$, we recursively call  \texttt{AccessD$^v_i$WithDrops}, as there
may be dropped differences for $v$ or one of its incoming neighbours $w$ at timestamp $d^*-1$. Therefore,
this may lead to further recomputations, which may further cascade.

\begin{example}
\label{ex:partial}
Consider the running example. Suppose that  after the first update, the system
decides to drop the  difference $+(b, 30)$ at iteration 1.\
Consider now the arrival of the second update where the weight of $(b, c)$ changes from 10 to 100. 
To maintain the computation differentially, \texttt{Min} is rerun on 
$c$ at $\langle G_2, 1 \rangle$ (due to $\delta E$ Direct Rule) and 
then due to the Upper Bound Rule on every timestamp in which $c$ has a difference. 
$c$ already has a difference that is not dropped at iteration 2, so $c$is scheduled  to rerun at iteration 2.
We further check if $c$ has any dropped differences at iterations 3 and 4. Since it does not,
we do not schedule $c$ to rerun at these differences.
Then when rerunning $c$ at 2, we need both $b$'s and $d$'s distances at iteration 1 and check if they have
any differences. We see that $d$ has stored difference but $b$ does not, so we check if $b$
 has a dropped difference at 1. Since it does, we recompute that difference by rerunning $b$ at iteration 1.
\end{example}
\vspace{-5pt}

Explicitly keeping track of all dropped VT pairs 
requires keeping additional state that is proportional to the number of differences that are dropped, which limits its scalability. 
Note that a difference is simply a triple that consists of a VT pair plus a vertex state (e.g., distance).
Suppose we need $d$ bytes to store VT pairs and $s$ bytes to store the actual state in a difference. Then, for each dropped $d+s$ bytes, we have to keep $d$ bytes in \texttt{DroppedVT}. This means that even if we partially drop $100\%$ of differences, there is a hard limit of $\frac{d}{d+s}$ on the scalability benefits we can obtain from deterministically
dropping differences. Our next optimization overcomes this limitation by using a probabilistic data structure.

\begin{figure}[t]
	\centering
\resizebox{\columnwidth}{!}{%
\begin{tikzpicture}
\draw (0, 0) node[inner sep=0] {\includegraphics[width=5in]{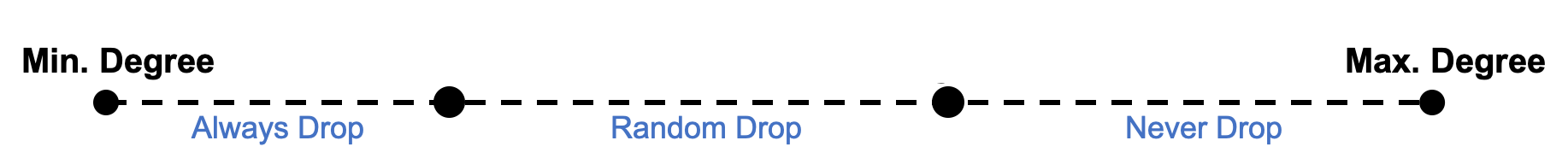}};
\draw (-2.7, 0.3) node {\color{black} \Large $\tau_{min}$};
\draw (1.3, 0.3) node {\color{black} \Large $\tau_{max}$};
\end{tikzpicture}
}
    \vspace{-15pt}
	\caption{Degree-Drop Strategy for dropping \diffs}
	\label{fig:selectiveExplained}
	\vspace{-10pt}
\end{figure}

\subsubsection{Probabilistic Difference Maintenance (\textsc{Prob-Drop}) }
\label{sec:optimization-prob}

\textsc{Prob-Drop} drops the entire difference, i.e., both the VT pair and the state, and 
uses a probabilistic data structure to maintain the dropped VT pairs. 
Probabilistic data structure, such as Bloom~\cite{bloom1970} or Cuckoo filters~\cite{cuckoo2014}, have the advantage that 
their sizes can remain much smaller than the amount of data they store. 
\textsc{Prob-Drop} requires a probabilistic data structure that never returns false negatives because
if a VT pair was dropped and the structure returns false when queried, we may ignore this
difference and reassemble incorrect states for vertices during $DC^{JOD}$. However, the structure can return
false positives, because false positives can only lead to unnecessarily recomputing a vertex state, but the recomputed vertex state will still be correct. 

We use a Bloom filter\footnote{We use the Bloom filter implementation from~\url{https://github.com/lemire/bloofi}}, into which we insert the dropped VT pairs. Using a Bloom filter requires minor modifications to the
\texttt{AccessD$^v_i$WithDrops} procedure from Section \ref{sec:optimization-det}. Specifically, in the second step, the procedure needs to check the Bloom filter for each potentially dropped difference at iteration $d \in (g^*, i]$ starting from $i$ to see if a VT
pair for $(v,d)$ was dropped. 
If the answer is negative, then processing moves to the next $d$ until
we arrive at $g^*$. In this case, the value from \texttt{D} for iteration $g^*$ (obtained from step 1) 
is the correct value of $v$ at iteration $i$. If the answer is positive for an iteration $d^* \in (g^*, i]$, then the 
value of pair $(v, d^*)$ is recomputed. 

In our evaluations, we show that \textsc{Prob-Drop} can increase the scalability of a GDBMS 
more than \textsc{Det-Drop} because its size does not grow as the system drops more \diffs. Furthermore, in some settings, 
the system does not need to drop as many \diffs\ in \textsc{Prob-Drop} as in \textsc{Det-Drop} to 
reach a certain scalability level (in our evaluations this is the number of concurrent queries).
\vspace{-7pt}
\subsection{Selecting the Differences To Drop}
\label{sec:selectingDiffs}

The second component of a partial difference dropping optimization is to decide which differences to drop. 
A baseline heuristic is to drop each difference uniformly at random. 
We next show a more optimized technique that use the degree information of vertices to select the differences to drop.

\subsubsection{Degree-based Difference Dropping}
\label{sec:degree-drop}

A GDBMS using DC to maintain continuously running recursive queries can exploit the fact that the 
dataset is a graph, therefore partitioning keys are vertex IDs. 
Intuitively, when executing the recursive algorithms we consider, high degree vertices are used frequently when computing the states of other vertices, i.e., they will be accessed more by DC when maintaining the input IFE dataflow. Therefore, dropping their differences can lead to frequent vertex state re-computations. Similarly, vertices with low-degrees are relatively less frequently accessed by DC. Based on this intuition, we implement a heuristic that takes two thresholds $\tau_{min}$ and $\tau_{max}$, for minimum and maximum degrees, respectively, and a probability parameter $p$. Then,
our heuristic performs the following for a difference with a VT pair $\langle$vertex, iteration$\rangle$ pair ($\langle v, i\rangle$) assuming that $deg(v)$ is the degree of vertex $v$ (Figure~\ref{fig:selectiveExplained}):

\begin{squishedlist}
\item If $deg(v) < \tau_{min}$, drop the difference.
\item If $deg(v) > \tau_{max}$, do not drop the difference.
\item Otherwise drop the difference with probability $p$.
\end{squishedlist}
\noindent We found empirically that setting  $\tau_{min}$ as 2 and $\tau_{max}$ as the top 80th degree percentile 
of the input graph is reasonable for the graphs we used in the experiments.
We note that more sophisticated properties, such as betweenness centrality of vertices, can also be used to decide the differences to drop. A practical advantage of using degrees is that, degree information is readily 
available in adjacency list indices, which are ubiquitously used in GDBMSs.


\section{Evaluation}
\label{sec:evaluation}

\subsection{Experimental Setup}

We run all experiments on a Linux server with $12$ cores and $32$ GB memory, unless mentioned otherwise. For each experiment, we report the total time, in single-threaded execution, needed to update the graph and the query answer after applying a batch of updates. For each dataset, we shuffle the edges, and split the dataset such that 90\% of the data is used as an initial graph, while the remaining 10\% models the dynamism in the graph consisting of the update to the graph. We use a default batch size of $1$, because differential computation is more suitable for near real-time dynamic graph updates than for infrequent updates. We evaluate the effects of batch size on
the performance of DC  
\iftechreport
in Appendix~\ref{sec:batchSizeImpact}.
\else
in the longer version of our paper~\cite{ammar:diff-tr}.
\fi
We use insertion-only workloads in our main experiments. 
\iftechreport
In Appendix~\ref{exp:delete}
\else
In the longer version of our paper~\cite{ammar:diff-tr},
\fi
we also present experiments that use workloads with different amounts of deletions for our main experiments.

\subsubsection{Datasets}
\label{sec:dataset}
We use a combination of real and synthetic graphs summarized in Table~\ref{datasets}\footnote{Reported degrees are for the initial loaded graphs in the experiments.}. The four real graphs are \textbf{Skitter},  \textbf{LiveJournal}, \textbf{Patents}, and \textbf{Orkut}, all obtained from~\cite{snapDatasets}. \textbf{Skitter} represents an internet topology from several scattered sources to millions of destinations on the internet and its vertices are strongly connected. 
\textbf{LiveJournal} and~\textbf{Orkut} ~\cite{snapDatasets} represent social network interactions with a vertex degree distribution that follows power-law. 
\textbf{Patent} ~\cite{snapDatasets} represents a citation graph for all utility patents granted between 1975 and 1999.
In order to experiment with weighted SPSP queries, we created weighted versions of both graphs by adding a
random integer weight between $1$ and $10$ uniformly at random to each edge.
 \textbf{LDBC SNB}~\cite{LDBC-SNB} is a synthetic graph that models dynamic interactions in social network applications. This graph has edge labels that are used in RPQ queries. LDBC SNB includes several types of entities, such as persons or forums. Each edge has a label such as \textsf{Knows} or \textsf{ReplyOf}. We use a scale factor of $10$ that generates a graph of $7.2$M vertices and $77.6$M edges. 

\begin{table}
\footnotesize
	\centering
	\caption{Datasets}
	\vspace{-10pt}
	\begin{minipage}{\textwidth}
	\begin{tabular}{| l || c | c | c | c | c |}
		\hline
	    {\bf Name} & {\bf $|E|$} & {\bf $|V|$} & {\bf Max.} & {\bf Avg.} & {\bf Avg.} \\
	    & & & \textbf{Degree} & \textbf{Degree} & \textbf{In-Degree}\\
		\hline
		\hline
		LiveJournal (LJ)  & $69$M & $4.8$M & $4$K & $8.5$ & $14.2$\\
		\hline
		Skitter (SK) & $11$M & $1.7$M & $35$K & $8.2$ & $12.6$\\
		\hline
		Patents & $16.5$M & $3.8$M & $704$ & $2.3$  & $4.7$ \\
		\hline
		Orkut & $117.2$M & $3$ & $29.6$K & $17.7$ & $34.4$ \\
		\hline
		LDBC SNB & $77.6$M & $7.2$M & $20.8$K & $7.3$ & $9.8$ \\
		\hline
	\end{tabular}
	\end{minipage}
	\label{datasets}
	\vspace{-10pt}
\end{table}

\subsubsection{Workloads}

We use SPSP, K-hop, and several popular RPQ queries as our main query workloads.  We run SPSP and K-hop on the weighted and unweighted versions of the real datasets, respectively.  For SPSP query generation, we pick a random pair of vertices in the graph. For K-hop, we pick a random set of vertices and set the value of maximum hops $K=5$ to make it a 5-hop query. 

RPQ queries require edge labels, so this experiment is conducted only on the LDBC dataset. We use a set of RPQ templates used in real-world workloads as defined in reference~\cite{bonifati2019navigating} which were used to study streaming RPQ evaluation in reference~\cite{sigmod20_PacaciBO20}. 
\iftechreport
There are only two recursive relationships in LDBC SNB: \textsf{Knows} and \textsf{ReplyOf}. Recursive here refers to an edge label that can exist consecutively in an arbitrary path. Therefore, some templates that expect more than two recursive relationships cannot be used in LDBC SNB. 
\fi
We use the following RPQ query templates:
\begin{itemize}
	\item $Q_1 = a^*$ 
	\item $Q_2 = a \circ b^* $ 
	\item $Q_3 = a \circ b \circ c \circ d \circ e $ 
\end{itemize}
We used \textsf{Likes}, \textsf{Knows}, \textsf{ReplyOf}, and \textsf{hasCreator}, to construct queries from these templates in the LDBC SNB dataset.

SPSP, K-hop, and RPQs are queries that 
can be supported in high-level languages of GDBMSs. These are the main queries that motivate our work. However our optimizations
are applicable to other computations based on IFE. To demonstrate this, we 
implemented the differential versions of  standard weakly connected 
components (WCC) algorithm, which is based on iteratively propagating 
and keeping track of minimum vertex IDs, and PageRank (PR) (ran a fixed 10 number of iterations) in our setting.

\subsubsection{Baselines and Different GraphflowDB Configurations}
\label{sec:baselineDescribtion}

We implement our optimizations inside the continuous query processor (CQP) of GraphflowDB~\cite{grapflow}, which is a shared memory GDBMS. We extended the CQP of GraphflowDB to implement a baseline \DC\ and our optimizations to maintain the recursive queries we cover (see 
\iftechreport
Appendix~\ref{sec:implementation}
\else
our longer paper~\cite{ammar:diff-tr} 
\fi
for the details of our implementation).
We call the GraphflowDB configurations for
 different configurations of \DC\ as: \VDC, \JOD, \DET, or \PROB. 

We compare our proposed optimizations with three baselines: \DD, \SC, and \DC. \DD~ is an implementation of our workloads in the Differential Dataflow system~\cite{differentialGithub}, which is the reference implementation of differential computation.
\SC\ is a baseline extension of GraphflowDB's CQP to support our queries by simply executing each query from scratch after every batch of changes. \SC\ represents a baseline GDBMS's performance that does not support continuous queries. 
We use an \IFE-like label propagation algorithm for K-hop queries and RPQs.
We note that this algorithm is identical to what is referred to as the ``incremental'' fixed point algorithm in the original
Differential Dataflow paper~\cite{mcsherry:ddf} (see Figure 1 in the reference). This term is used to indicate that 
only the vertices whose values are updated in a particular iteration propagate their labels in that iteration (as opposed to all vertices).  

\VDC~ is the vanilla differential computation implementation in GraphflowDB. 
The difference between \VDC~ and \DD~ is that the former is our single machine implementation using Java while the latter is a distributed system implemented in Rust.  
In Section~\ref{exp:baseline}, we verify that \VDC\ behaves similar to \DD\ (and even outperforms it in terms of runtime); therefore, we use \VDC~ as a suitable baseline for our optimizations that is implemented inside the same GDBMS.
\VDC\ ingests and stores the input graph in the same way, uses similar data structure to store the differences,
and the same programming language as the following GraphflowDB configurations:

\begin{enumerate}
	\item \JOD: The \DC\ version that implements join-on-demand optimization from Section~\ref{sec:complete-drop};
	\item \DET: Integrates deterministic partial dropping optimization on top of \JOD\ as discussed in Section~\ref{sec:optimization-det};
	\item \PROB: Integrates probabilistic partial dropping optimization on top of \JOD\ as discussed in Section~\ref{sec:optimization-prob}.
\end{enumerate}
We also evaluate different versions of \DET\ and \PROB\ to evaluate our degree-based difference dropping
optimization. 
%

\vspace{-10pt}
\subsection{Baseline Evaluation}
\label{exp:baseline}

\begin{figure*}[t]
\vspace{-30pt}
	\centering
		\begin{subfigure}{\textwidth}
		    \centering
			\includegraphics[height=1.25in]{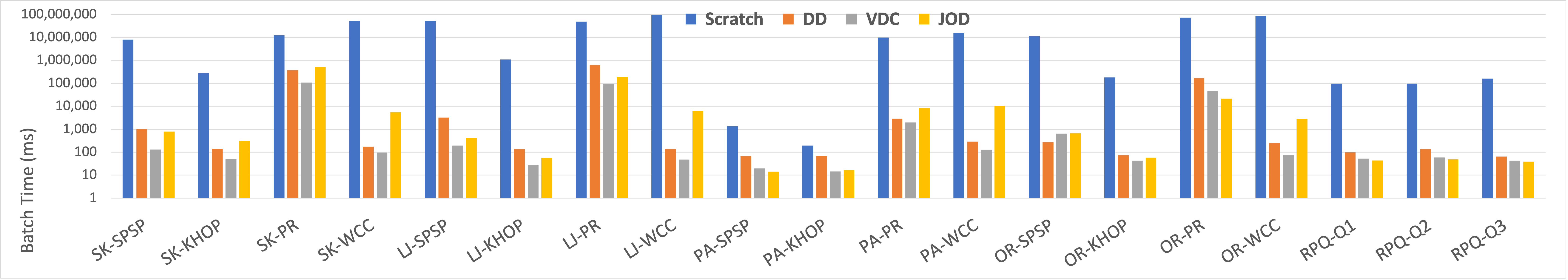}
			\caption{Total Batch Time (ms)}
		\end{subfigure}
	\newline
	\begin{subfigure}{\textwidth}
	    \centering
		\includegraphics[height=1.3in]{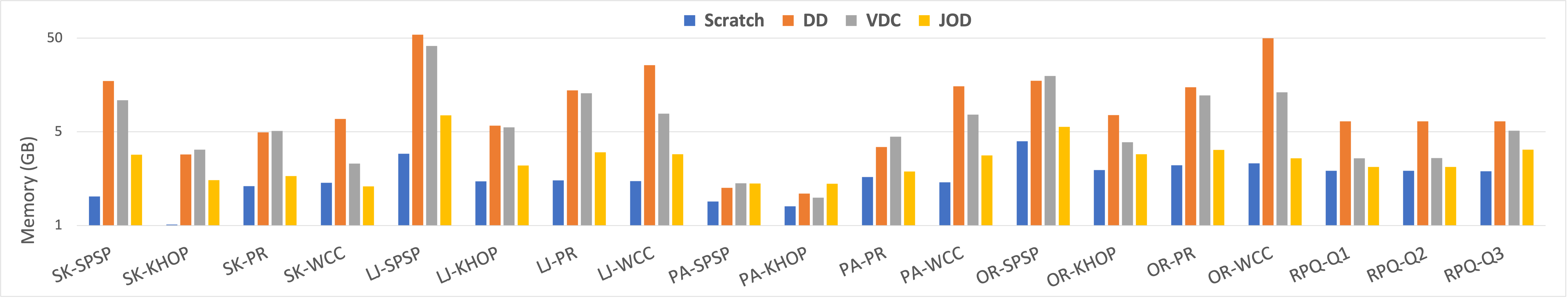}
		\caption{Memory (GB)}
		\label{fig:baseline-memory}
	\end{subfigure}
	\vspace{-10pt}
	\caption{Comparison between \SC, \DD, \VDC, and join-on-demand (\JOD).}
	\label{fig:Baselines}
	\vspace{-10pt}
\end{figure*}

Our first set of experiments measure the performances of
\SC, \DD, and \VDC. Our goals are: (i) to obtain baseline measurements for our
optimized \DC\ implementations; and (ii) 
to validate that \VDC\ is competitive with \DD\ to justify its use as a more suitable
baseline than \DD\ for our optimizations. 
In this experiment, we ran SPSP, K-hop queries, WCC and PR on Skitter, LiveJournal, Patents, and Orkut datasets, and all three RPQ queries on LDBC dataset. For SPSP, K-hop and RPQ workloads, we used 10 queries. In each experiment,
we simulated dynamism by using 100 insertion-only batches, with 1 edge in each batch. 

Our results are shown in Figure~\ref{fig:Baselines} (ignore the \JOD~charts for now). 
As shown in the figure \SC, as expected, is several orders of magnitude slower than \VDC\ and \DD\ but also has the
smallest memory overheads. 
\SC\ is most competitive with \VDC\ and \DD\ in PR, though still over an order of magnitude slower.
This is expected because as also observed in prior work~\cite{Graphsurge}, 
during differential maintenance, the changes in PR are harder 
to localize to small neighbourhoods as in other computations, i.e., small changes 
are more likely to change the PR values of larger number of vertices. 
We observe that \VDC\ is slightly faster than \DD\ while using comparable memory.
We expect \VDC\ to be faster than \DD\ because \DD\ assumes a distributed 
setting where messaging involves network protocols, even though we are running \DD\ in a single
machine setting. Instead, \VDC\ assumes a shared memory setting avoiding such 
communication.

\iftechreport
Appendix~\ref{exp:delete}
\else
The longer version of our paper~\cite{ammar:diff-tr}
\fi
repeats these experiments with two different update workloads that include deletions: (i) where 25 of the batches are deletions; and (ii) where 50 of the batches are deletions. We observe that the performance tradeoffs our optimizations offer are broadly
similar across these different update workloads. Note that this is expected as the amount of updates we ingest is relatively minor compared to the number of edges we start with, which recall comprise 90\% of all edges in each dataset.
Overall these results confirm that \VDC\ is a more suitable baseline for analyzing the effects of our optimizations than \DD.
In the remainder, we use \VDC\ and \SC\ as the main baselines to evaluate our proposed optimizations on top of \VDC.

\begin{figure*}
	\centering
	\begin{subfigure}{0.66\columnwidth}
		\includegraphics[width=2in]{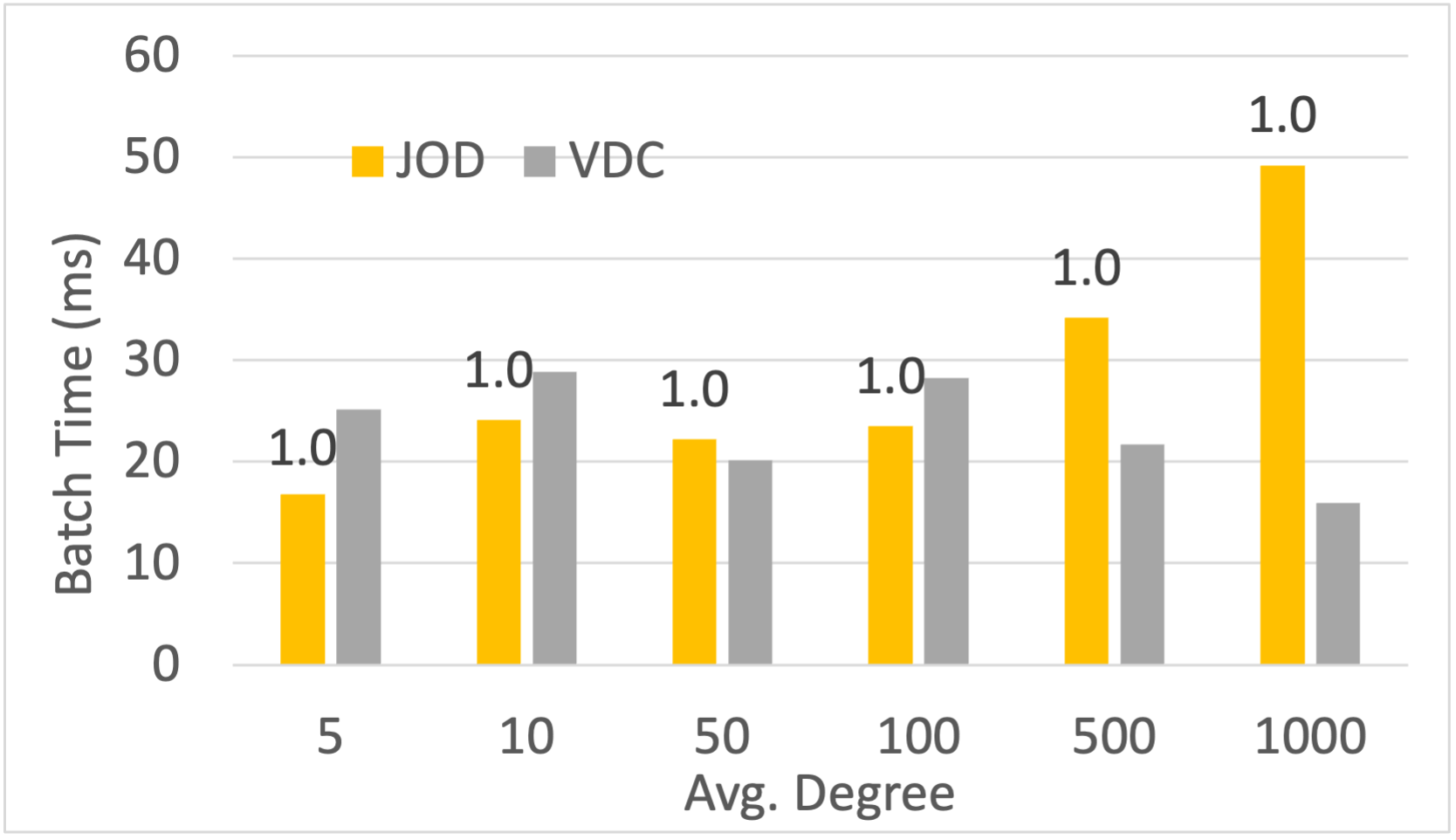}
		\caption{RPQ-Q1 on the \texttt{Knows} subgraph of LDBC.}
		\label{fig:Q1Knows}
	\end{subfigure}
	\begin{subfigure}{0.66\columnwidth}
		\includegraphics[width=2in]{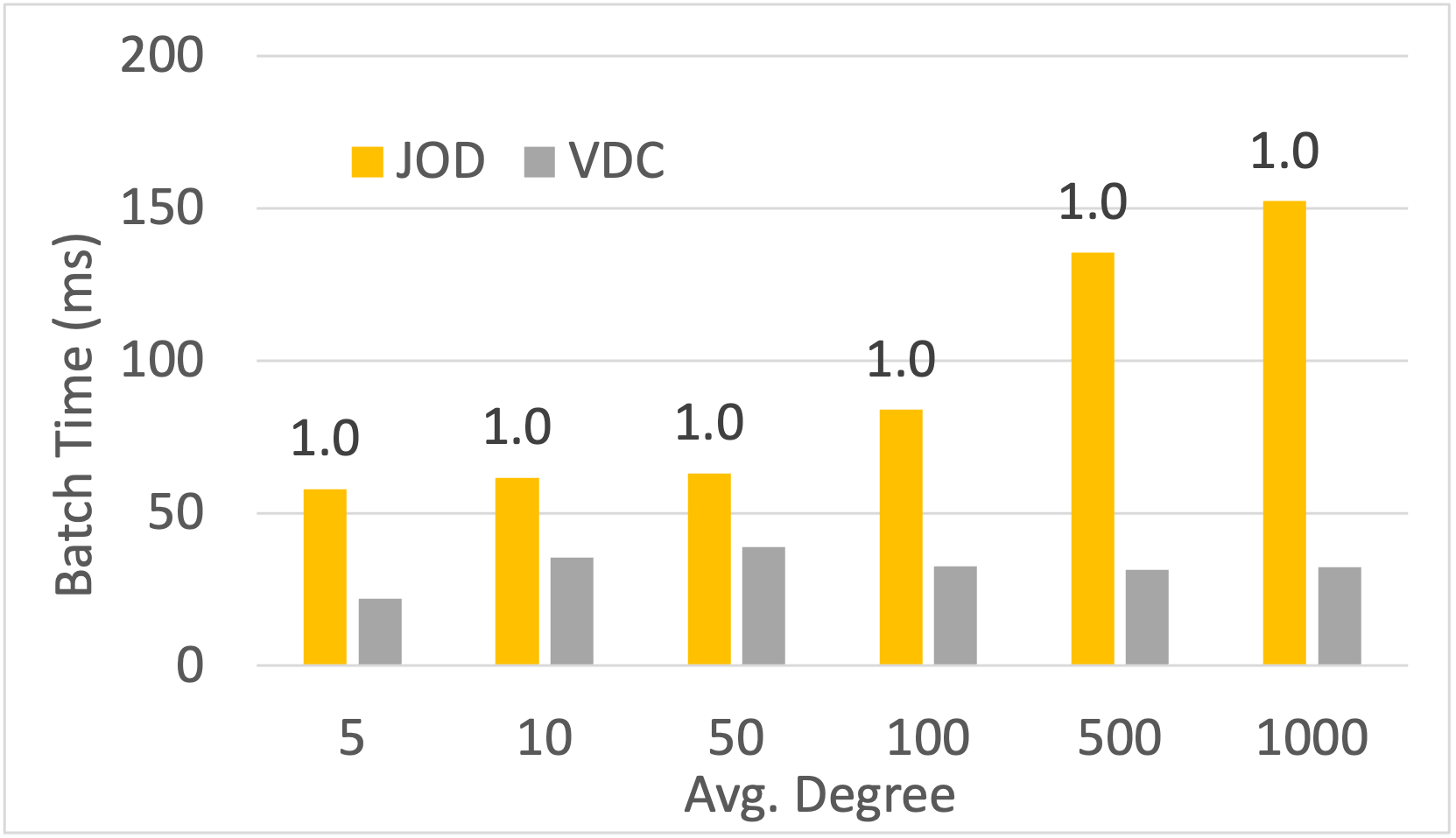}
		\caption{K-hop on the \texttt{Knows} subgraph of LDBC.}
		\label{fig:K-hopKnows}
	\end{subfigure}
	\begin{subfigure}{0.66\columnwidth}
		\includegraphics[width=2in]{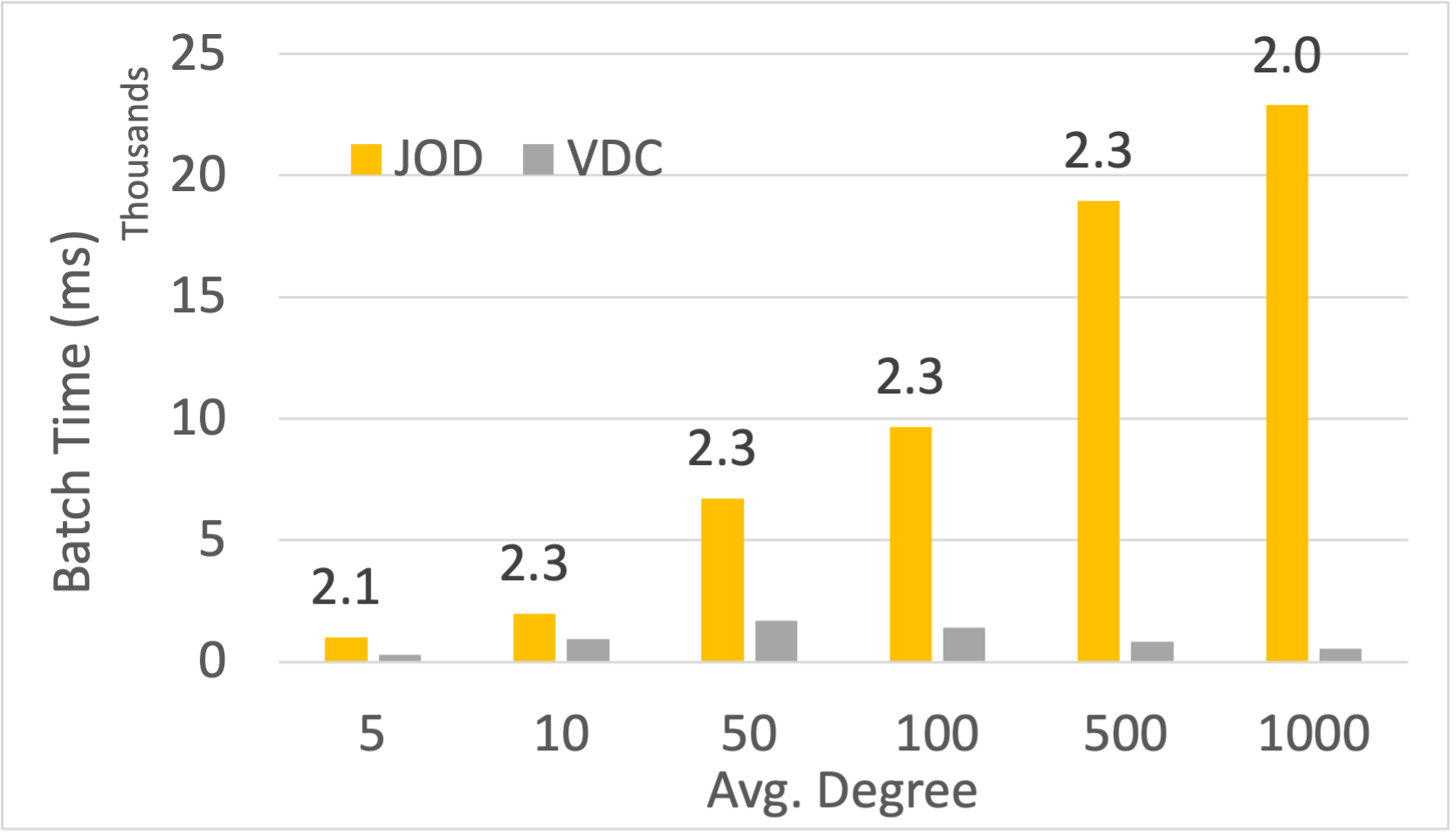}
		\caption{SPSP on the \texttt{Knows} subgraph of LDBC.}
	        \label{fig:LDBC_avgDegree}
	\end{subfigure}
	\vspace{-10pt}
	\caption{Comparison of \VDC\ and \JOD\ when running RPQ-Q1, K-hop, and SPSP as we increase the average vertex degree in the \texttt{Knows} subgraph of LDBC. Numbers on top of the are the average number of \diffs\ in $\delta D$ per vertex.}
	\vspace{-10pt}
	\label{fig:ldbc-degree-controlled}
\end{figure*}

\vspace{-10pt}
\subsection{Join-On-Demand}
\label{exp:CDD}
Our next set of experiments aim to study the performance and memory benefits and overheads of \JOD. \JOD~ is guaranteed to reduce the memory overhead of a system implementing vanilla differential computation, e.g. \DD\ or \VDC.
 However, in terms of performance, \JOD~ has both computation overheads and benefits. 
 On the one hand using \JOD\ reduces the work done by vanilla differential computation for storing differences. 
However, as updates arrive, \JOD\ requires re-computing the join on demand by 
reading the states of in-neighbours' of vertices at different timestamps to
inspect if some $\delta J$ partitions are non-empty.  This is less performant than materializing $\delta J$ difference sets
and inspecting them to see if they are non-empty. Our goal is to answer: \emph{What is the net effect of these performance benefits and costs? What governs this net effect?} 

Our hypothesis is that \JOD\ computation overhead increases proportionally with the average degree of the input graph. 
This is because, given a vertex $v$, looping through $v$'s incoming neighbours to re-compute the join
at a timestamp $t$ should increase with the number of neighbours of $v$.
At the same time the benefits 
of \JOD~ from not storing the differences depends on how many differences
are produced by the \texttt{Join} operator. This depends partially on average degree 
but also on the average number of times the state of a vertex changes during a computation. 
For example, readers can see that in the full difference trace of our running example, 
which
is presented in Table~\ref{fullTrace},
there is a new $\delta J$ difference
only when the state of a vertex changes.
As we will momentarily demonstrate, this number is quite small and does not necessarily
grow as the average degree increases on our computations. 
Therefore as the average degree increases, we expect that \JOD's overhead to increase faster than its benefits, 
and we should eventually see \VDC~outperforming \JOD~in terms of performance.

In our first experiment, we rerun our baseline experiments from Section~\ref{exp:baseline} with \JOD.
The average in-degrees Orkut, Skitter, LiveJournal, Patents, and LDBC (for the subgraph containing \texttt{Knows edges})
are respectively, $34.4$, $12.6$, $14.2$, $4.7$, and $4.7$. So
expect \VDC\ to be faster than \JOD\ by larger margins on Orkut and Skitter and smaller margins on Patents and LDBC.
Our results are shown in Figure~\ref{fig:Baselines}.
As expected, we observe that \JOD~ uses significantly less memory (between $1.2\times$ to $5.5\times$) than \VDC\ 
irrespective of the input graph or query. In terms of performance, we find as expected that \VDC\ is faster than \JOD\  on Orkut ($1.3$x on k-hop) and Skitter ($4.6\times$ on K-hop) and even slower than \JOD\ on Patents ($2.4$x on SPSP) and on LDBC RPQs (by a factor of $1.2\times$).

Although the previous experiment provides support for our hypothesis, the degree
differences between the input graphs we used are still relatively close to each other and we did not control for the 
queries we used across these datasets.
We next perform a more controlled experiment. Using LDBC, we systematically
increase the average degree of the \texttt{Knows} subgraph from its original value  ($4.7$ to $20$) to $100$, $500$, and $1000$ and run all of our algorithms SPSP, K-hop, and RPQ query Q1 on each version of these graphs. We increase the average degree by adding random edges that connect vertices in this subgraph. 

Our results are shown in Figure~\ref{fig:ldbc-degree-controlled}. 
As we expect, when the average degrees are small, specifically for RPQ queries, \JOD~either outperforms or is competitive with \VDC, but as the degrees get large, \VDC~consistently outperforms \JOD. 
 The numbers on top of the \VDC\ and \JOD\ bars in Figure~\ref{fig:ldbc-degree-controlled} are the average number of differences in $\delta D$ per vertex for vertices that have non-zero differences, measured at the end of the experiment. Note that this number is always 1.0 for K-hop and Q1 as vertex values in these computations take only one value assigned at the first iteration in which a vertex becomes reachable.
 
 \iftechreport
We note that this number does not even necessarily increase as the degree increases, and remains small relative to the average degree.  
It can even decrease in SPSP, primarily because SPSP converges faster when the degrees are larger, i.e., the number of SPSP iterations decrease, so the number of different  differences vertices get can decrease.
\fi

\begin{figure}
	\centering
	\begin{subfigure}{1\columnwidth}
		\centering
		\includegraphics[width=2.5in]{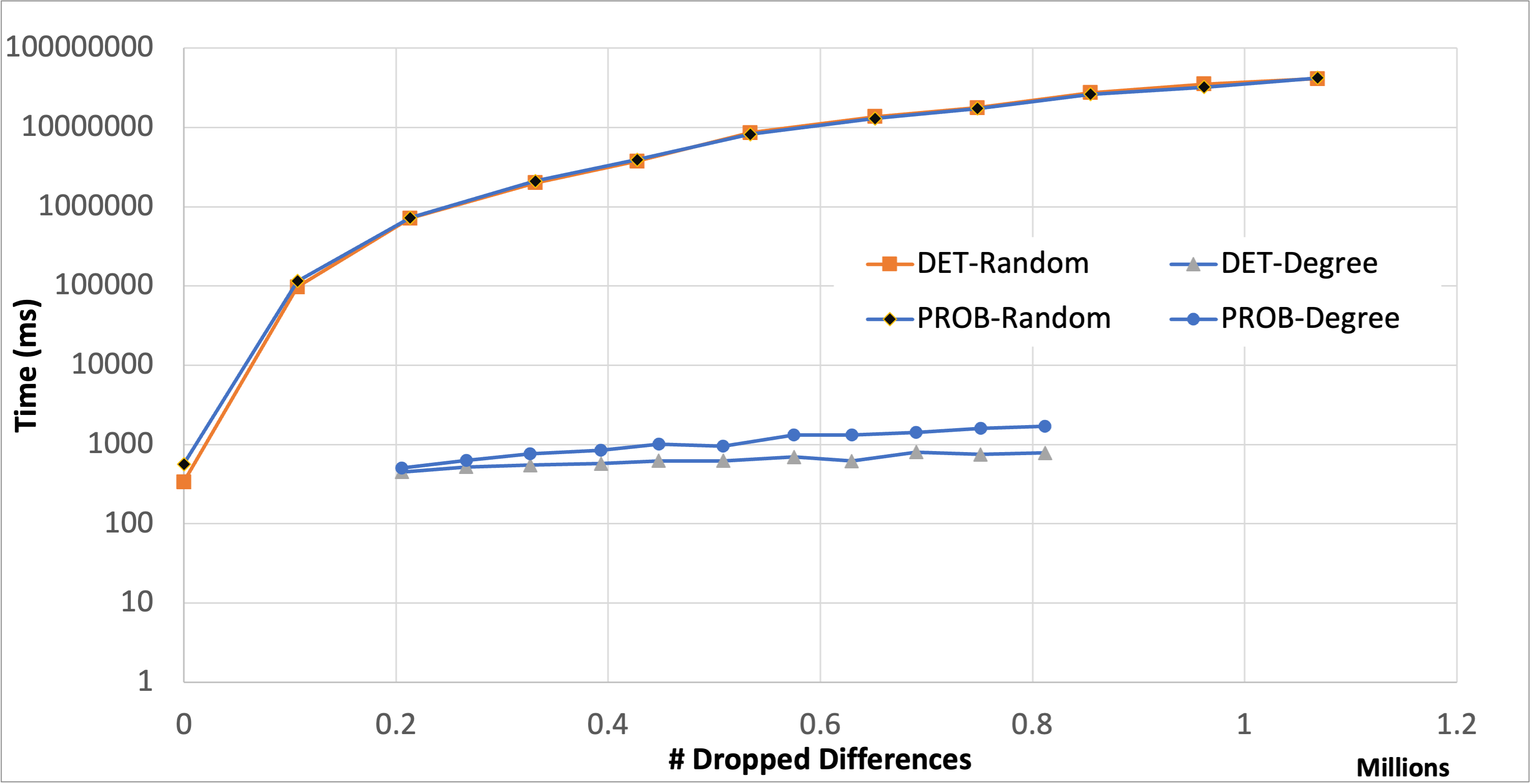}
		\caption{Performance of difference selection policies.}
		\label{fig:random-vs-selective}
	\end{subfigure}
\newline
	\begin{subfigure}{1\columnwidth}
		\centering
		\includegraphics[width=2.5in]{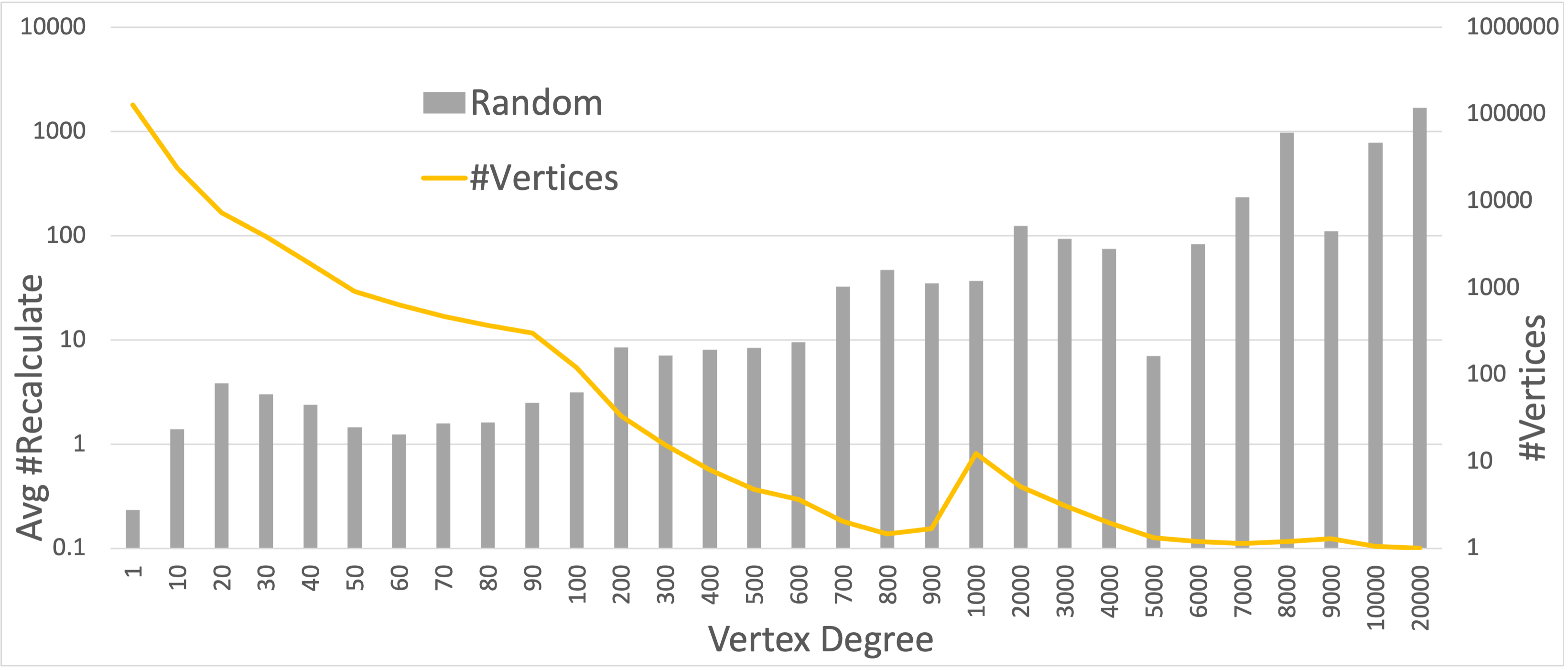}
		\caption{Average number of dropped differences re-computed per vertex.}
		\label{fig:degreeRecalculate}
	\end{subfigure}
	\vspace{-7pt}
	\caption{Comparison of \Random\ and \Degree\-based difference dropping when running 10 K-hop queries.}
	\label{fig:howToDrop}
	\vspace{-20pt}
\end{figure}

\vspace{-10pt}
\subsection{Selecting the Differences To Drop}
\label{exp:Random}

We next evaluate the effectiveness of the two strategies we discussed in Section~\ref{sec:selectingDiffs} for selecting
which differences to drop in our partial dropping optimization. We refer to these as: (i) \Random\, which 
randomly selects the differences with a probability $p$; and (ii)  \Degree\ drops differences based on
vertex degrees. 
As we discussed in Section~\ref{sec:degree-drop}, we expect \Degree\ to outperform \Random.

We run 10 K-hop queries over Skitter with 100 insertion-only batches of size 1 using \DET\ and \PROB\ with both \Random\ and \Degree\  selection strategies. In total, we have 4 system configurations. For \Degree, we set $\tau_{min}$ to 2 and
$\tau_{max}$ to the $80^{th}$ percentile of the vertex degrees. 
We increase the dropping probability $p$ for \DET\ and \PROB\ starting
from 0 to 100\% and plot the total number of dropped differences on the $x$-axis and the runtime on the $y$-axis. 
Figure~\ref{fig:howToDrop} shows our results.
First, observe that as expected all of the lines in the figure go up, i.e.,
as we drop more differences the performance of each system configuration gets slower. Note that in \JOD\ storing fewer differences potentially leads to performance advantage as we have to maintain less differences. This advantage does not exist for partial dropping optimizations because they still have to store and maintain auxiliary data structures to maintain the dropped differences. 
So dropping differences primarily has a performance cost, as it can lead the system to recompute those dropped differences.
Second observe that as we expect, configurations with 
\Degree\ (the two bottom lines), irrespective of if we use 
 \DET\ and \PROB, are 
between $3$ to $5$ orders of magnitudes faster than the configurations with \Random\ (two top lines).
Note that the lines with \Random\ have a bigger span on $x$-axis 
because there are limits to the minimum and maximum number
of differences that configurations with \Degree\ can drop.
For example, at the minimum when $p=0$, the configurations with
\Degree\ still drops all differences of vertices with  degree $< \tau_{min}$, 
whereas \Random\ can drop as few as 0 differences.

We  perform further analyses using a micro-benchmark to better explain the performance difference 
between \Random\ and \Degree. We first fix the drop probability $p$ ($=0.1$), a workload (10 K-hop queries) and a dataset (Skitter with 100 batch of 1 edge insertions).
We then use \DET\ with \Random\ selection policy and count 
for each vertex $v$ the number of times \DET\ re-computed a dropped difference with key $v$, 
i.e., how many times \DET\ has accessed $D^v$ at some point, but $v$'s state had to be re-computed
because a difference was dropped
in \texttt{DroppedVT}. Then we bucket vertices by their degree,
where for each degree bucket (e.g., $[1-10)$) we plot the average number of re-computations for each vertex
in that bucket. 
Figure~\ref{fig:degreeRecalculate} shows our results.
The bar charts use the left $y$-axes and 
represent the average number of re-computations for vertices with different degree buckets, where a tick in the $x$-axes 
represents a bucket with the next tick. 
The line chart uses
the right $y$-axes and plots the vertex degree distribution in the graph.

As shown in Figure~\ref{fig:degreeRecalculate}, the degree
distribution follows a power-law distribution,  as is commonly the case in real world graphs. 
The average number of re-computations per vertex follows the opposite trend where vertices
with smaller degrees on average lead to fewer re-computations, e.g, vertices with degree more than $2000$ lead
to more than $1000$ re-computations on average, while those with degrees $[1, 10)$ lead to less than $1$ re-computations.
Since the memory saving of dropping $1$ \diff\ is the same regardless of the vertex degree, 
as done by our \Degree\ strategy,
it is more efficient
to drop more \diffs\ from vertices with smaller degrees. 

\begin{figure*}
	\centering
	\begin{subfigure}{0.66\columnwidth}
		\includegraphics[width=2.2in]{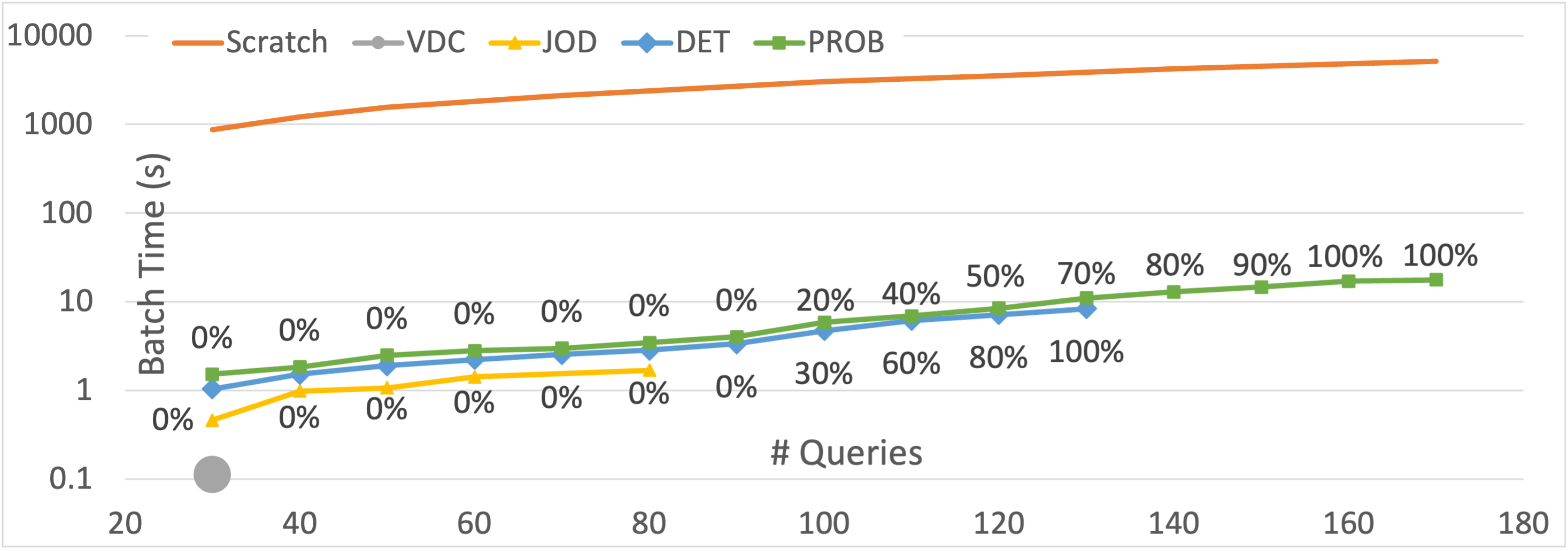}
		\caption{K-hop query in SK dataset}
		\label{fig:SK-K-hop}
	\end{subfigure}
	\begin{subfigure}{0.66\columnwidth}
		\includegraphics[width=2.2in]{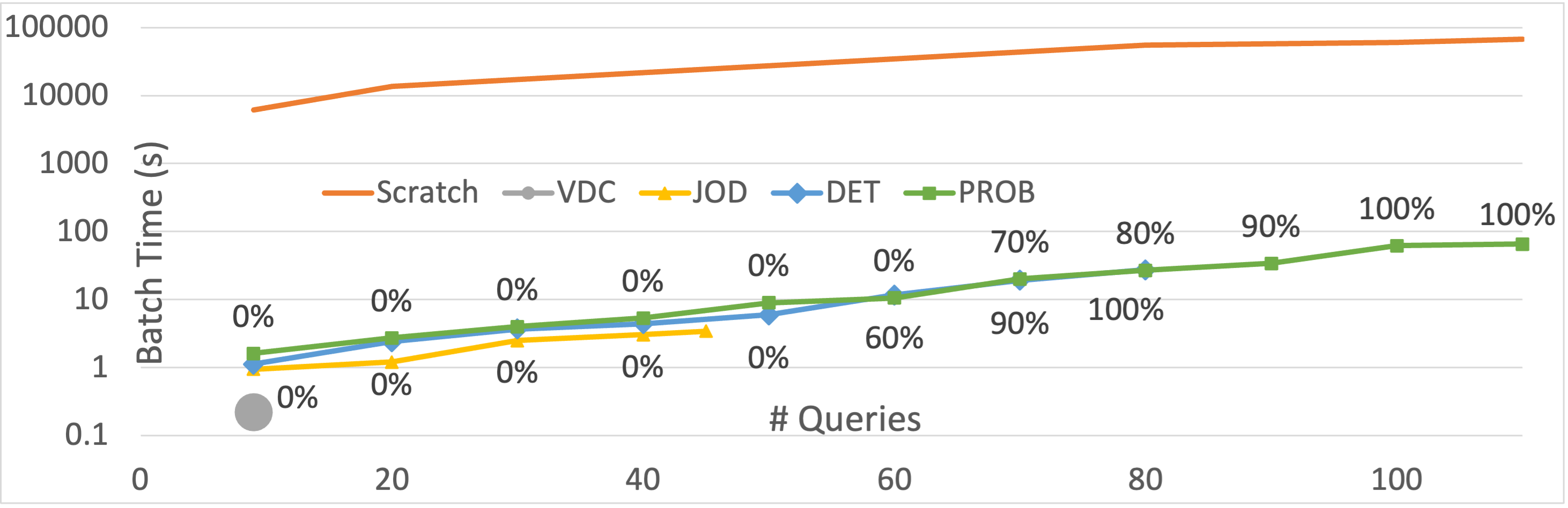}
		\caption{SPSP query in SK dataset}
	\end{subfigure}
	\begin{subfigure}{0.66\columnwidth}
		\includegraphics[width=2.2in]{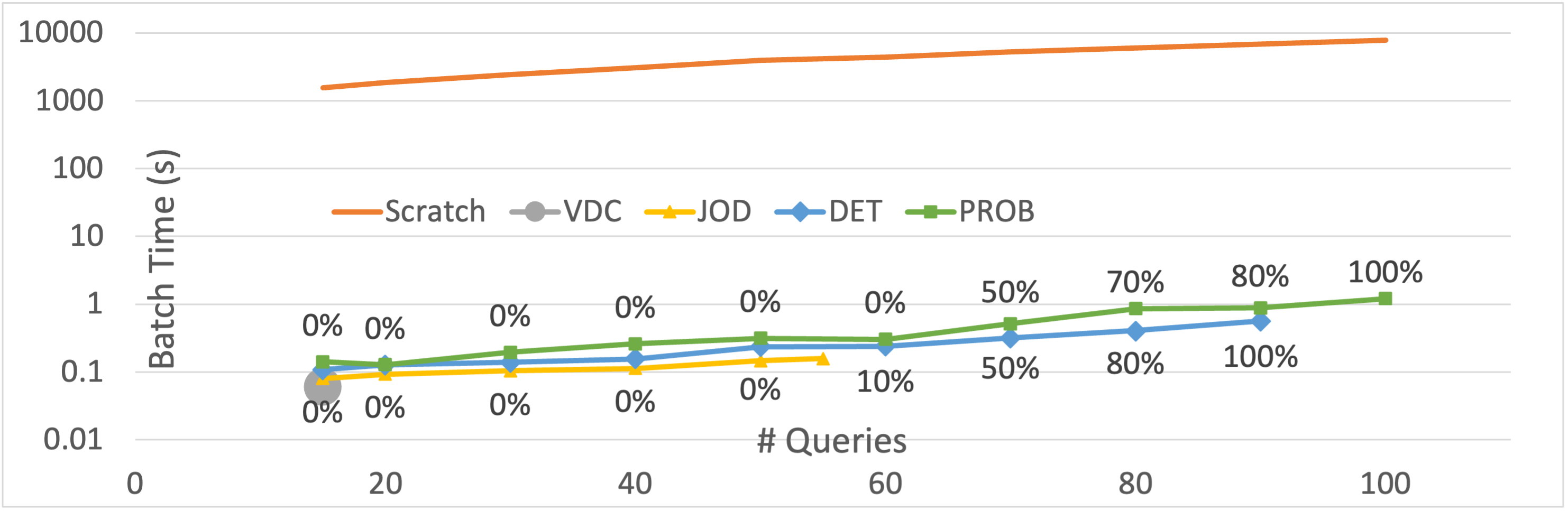}
		\caption{K-hop query in LJ dataset}
	\end{subfigure}
	\begin{subfigure}{0.66\columnwidth}
		\includegraphics[width=2.2in]{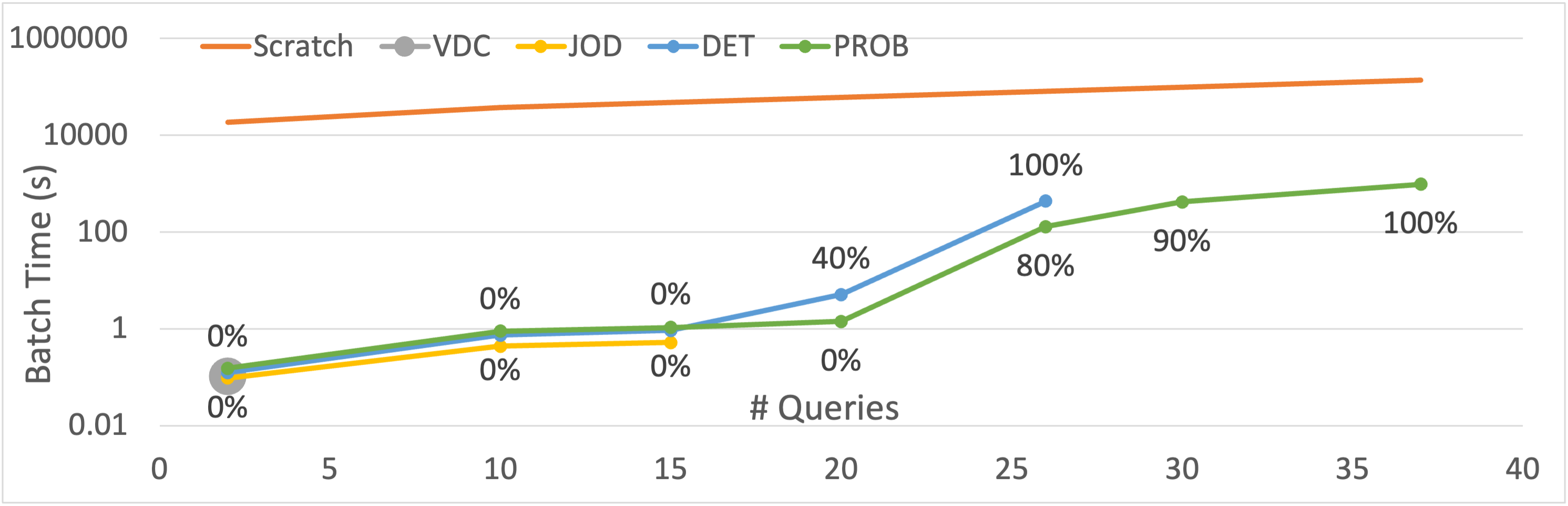}
		\caption{SPSP query in LJ dataset}
		\label{fig:LJ_SPSP}
	\end{subfigure}
	\begin{subfigure}{0.66\columnwidth}
		\includegraphics[width=2.2in]{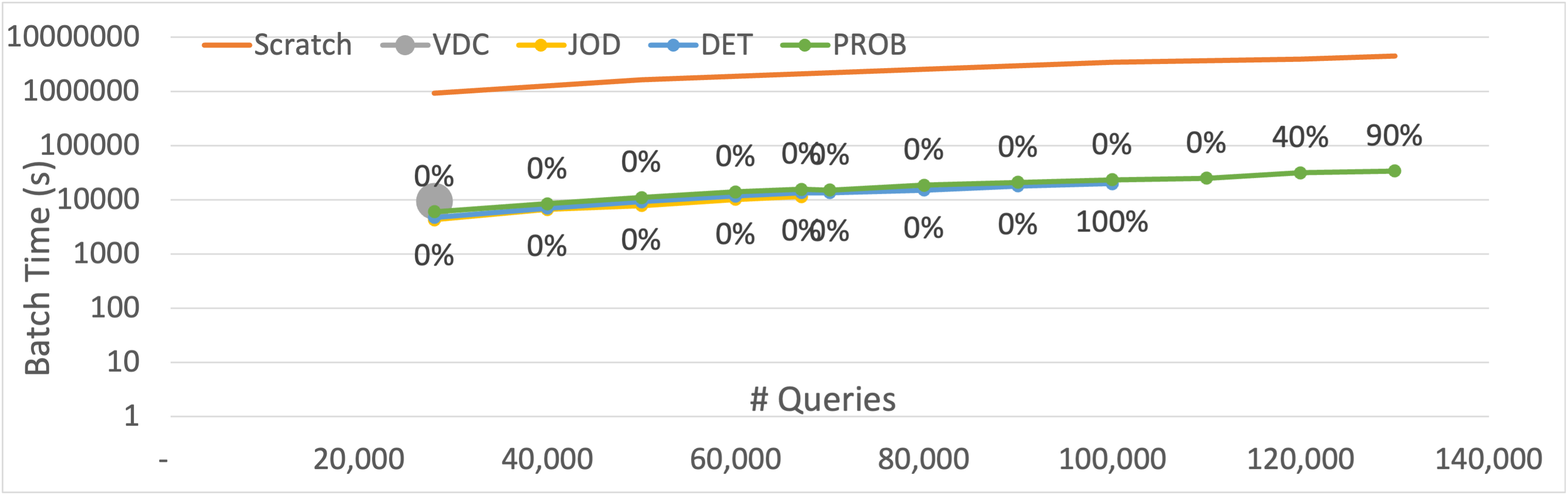}
		\caption{K-hop query in PA dataset}
	\end{subfigure}
	\begin{subfigure}{0.66\columnwidth}
		\includegraphics[width=2.2in]{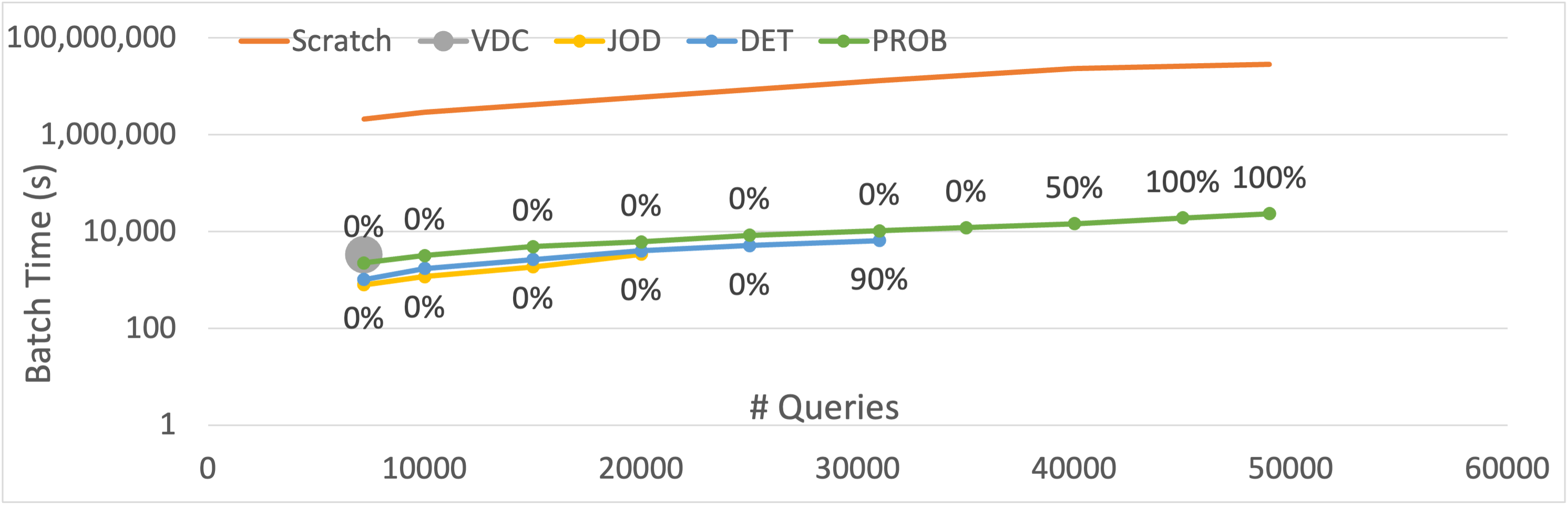}
		\caption{SPSP query in PA dataset}
		\label{fig:PA_SPSP}
	\end{subfigure}
	\begin{subfigure}{0.66\columnwidth}
		\includegraphics[width=2.2in]{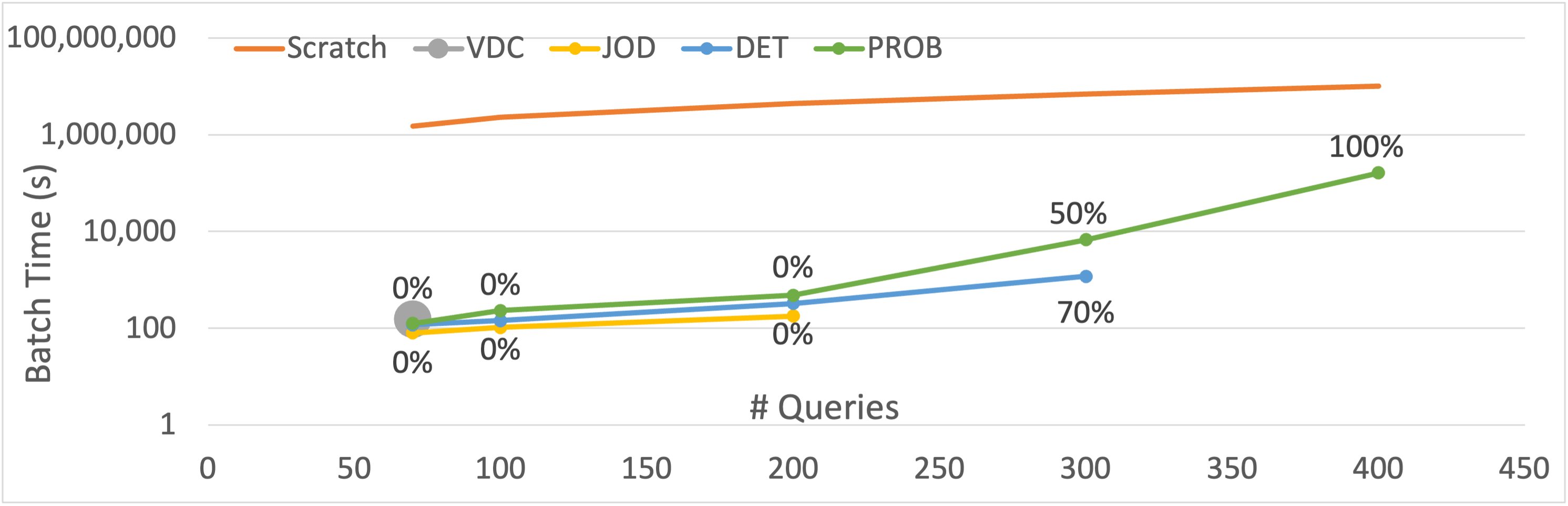}
		\caption{K-hop query in OR dataset}
	\end{subfigure}
	\begin{subfigure}{0.66\columnwidth}
		\includegraphics[width=2.2in]{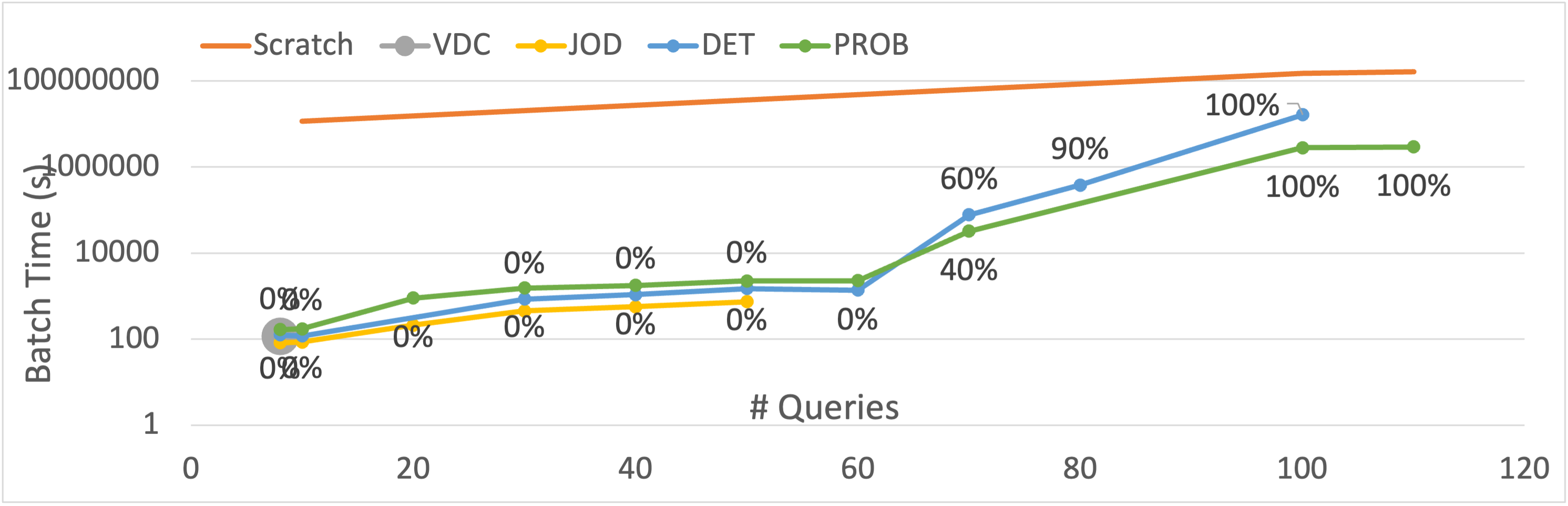}
		\caption{SPSP query in OR dataset}
		\label{fig:OR_SPSP}
	\end{subfigure}
	\begin{subfigure}{0.66\columnwidth}
		\includegraphics[width=2.2in]{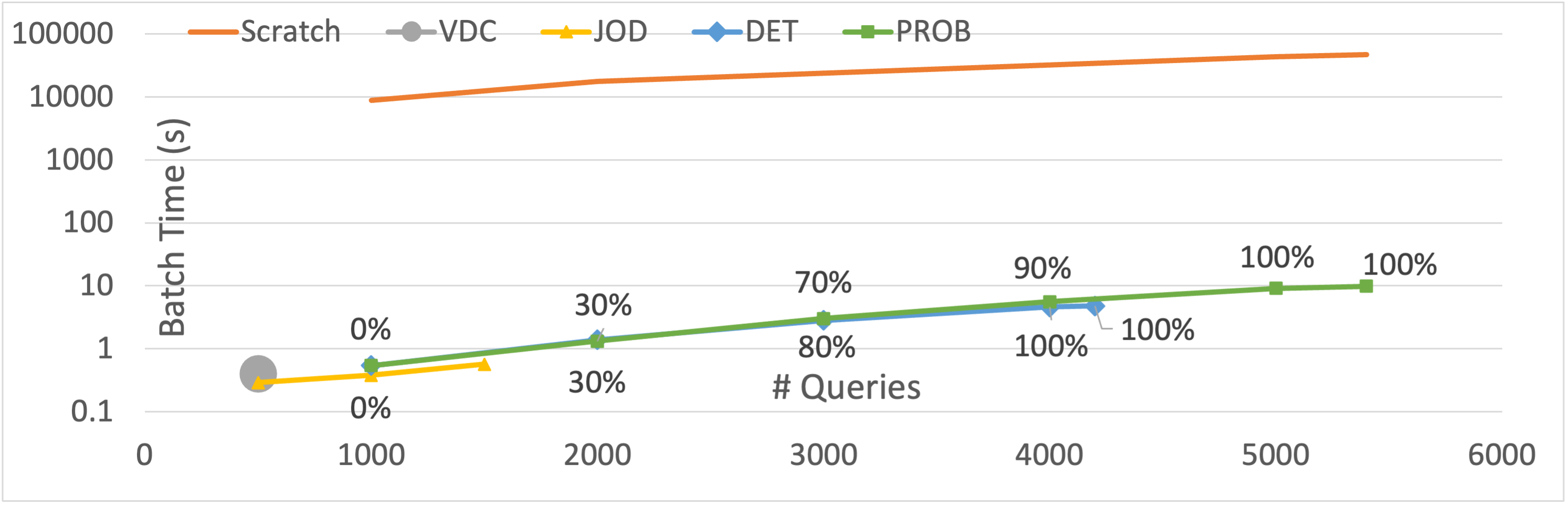}
		\caption{RPQ-Q1}
		\label{fig:Q1}
	\end{subfigure}
	\begin{subfigure}{0.66\columnwidth}
		\includegraphics[width=2.2in]{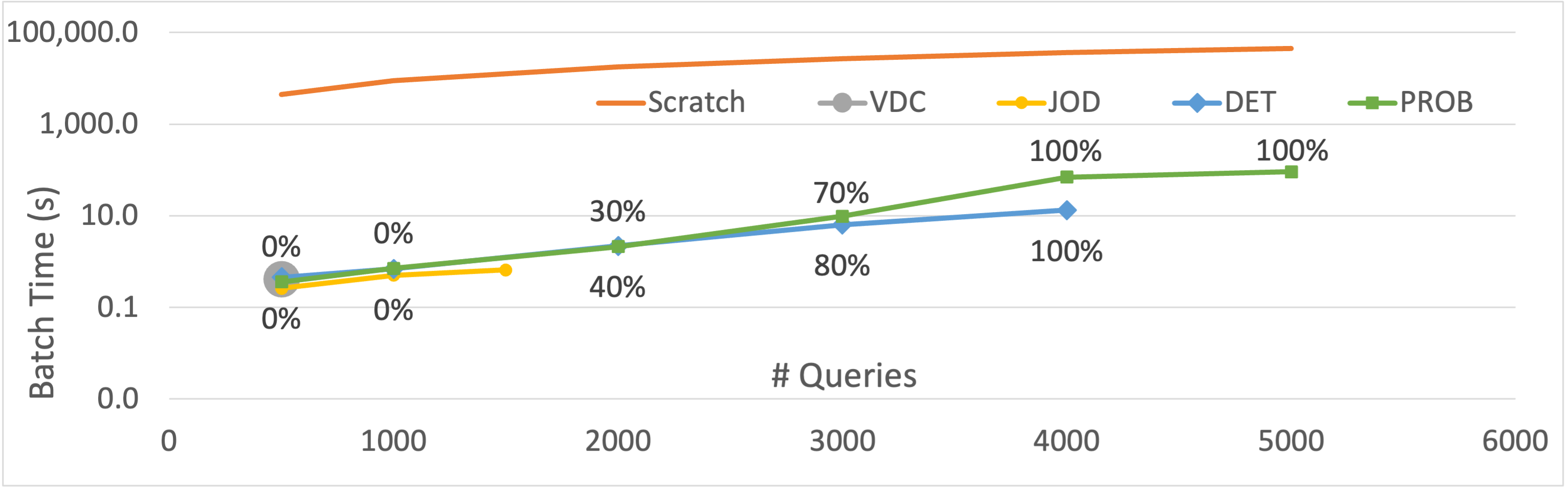}
		\caption{RPQ-Q2}
	\end{subfigure}
	\begin{subfigure}{0.66\columnwidth}
		\includegraphics[width=2.2in]{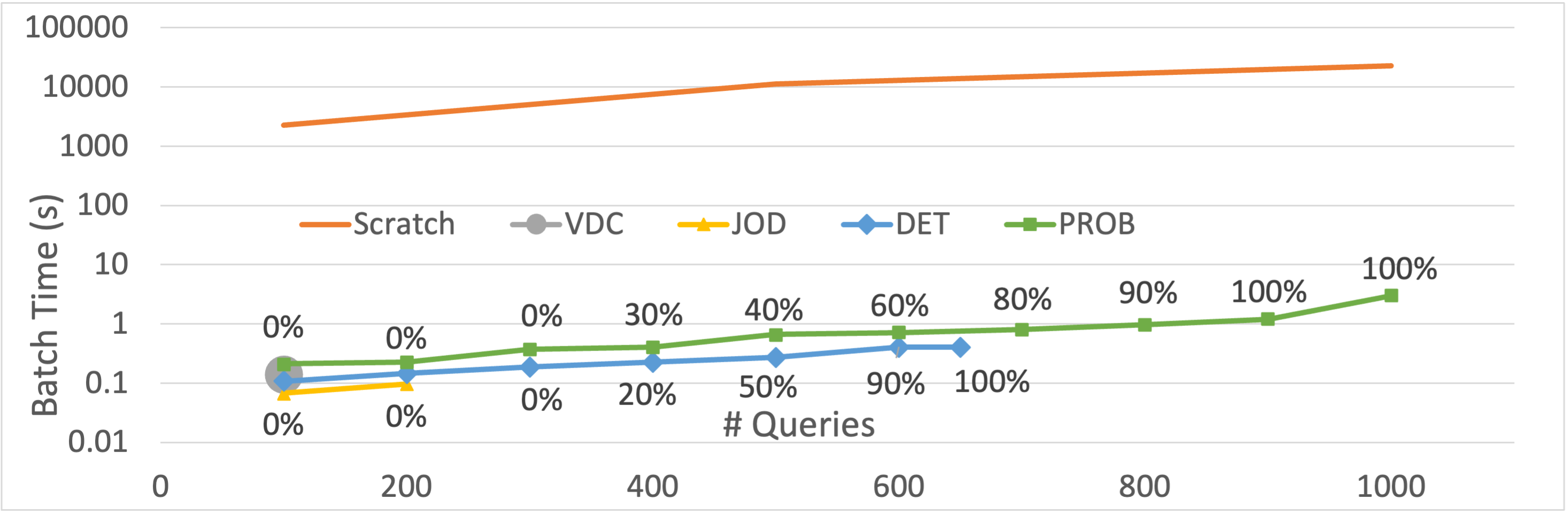}
		\caption{RPQ-Q3}
	\end{subfigure}
	\caption{Number of queries maintained by \SC, \DC, \JOD, \DET, and \PROB\ under a limited memory budget of 10GB. The large dot in bottom left of each figure is \DC.}
	\label{fig:scalability}
\end{figure*}

\vspace{-15pt}
\subsection{Difference Maintenance}
\label{exp: partialDrop}
\label{exp:scalability}
Our next set of experiments focus on evaluating \DET\  and \PROB. 
In the experiments reported in Figure~\ref{fig:random-vs-selective}, we evaluate the performances of \DET\ and \PROB\ when both drop exactly the same number of differences when using \Degree\ and \Random\ selection policies. 
They behave similarly when using the same selection strategy, with \DET\ slightly more performant, which is expected as \PROB\
may perform spurious re-computations due to false positives. However, 
\DET\ and \PROB\ do not have similar memory footprints when they drop
the same number of differences: \PROB's approach 
is more efficient than \DET. We next provide a more systematic evaluation of the scalability and performance tradeoffs of
these techniques under \Degree\ policy, which as we established outperforms \Random.

Our experiment analyzes how much \DET\ and \PROB\ increases the system scalability in terms of the number of concurrently maintained queries relative to \VDC\ for a given memory budget
 for SSSP, K-hop, and RPQ queries. We omit PageRank and WCC from these experiments, as these are batch computations and we cannot increase the number of queries for these.
For completeness, we also evaluate the performances of \JOD\
and \SC.
To simulate a fixed memory budget environment, we give each system configuration $10$GB memory for storing differences
and/or additional data structures, e.g., to manage dropped VT pairs. 
We repeat our experiment from Section~\ref{exp:baseline} with the same datasets and query combinations. 
However, we now increase the number of queries systematically until the system runs out of memory.

Figure~\ref{fig:scalability}  shows our results. 
We use the maximum scalability level of \VDC, which is the configuration with the highest memory overheads, 
as the lowest number of queries we use and increase the number of queries in the system from this point on. That is why
\VDC\ appears as a single grey point in our charts.
 For \DET\ and \PROB, for each number of queries $q$, 
we find the lowest dropping probability $p_{det}$ for \DET\ and $p_{prob}$ for \PROB\ 
that can support $q$ queries and
report their performances with these levels. 
Note that here we are assuming an ideal setting in which a system 
is able to find this lowest dropping probability. Although this may be challenging
in practice, this allows us to evalaute the most 
performant versions of \DET\  and \PROB\ for the given query level.
We show $p_{det}$ that we use for \DET\ under the \DET\ line, and the $p_{prob}$ that is used
for \PROB\ above the \PROB\ line.

We make several observations. First, as in Figure~\ref{fig:Baselines},  we see that
\JOD\ can increase the number of queries that could be concurrently run by $2.3$$\times-10 \times$ over \VDC. 
Second, we observe that increasing the number of queries with partial dropping optimizations 
can increase the run time super-linearly beyond a particular point where  increasing scalability requires increasing the dropping probability, which leads to more \diffs\ to be re-computed.
However, we see that partially dropping differences can still increase the number of concurrent queries by up to $20\times$ relative to \VDC\ while still outperforming \SC\ by several orders of magnitude. 
Third, we compare the performances of \DET\ and \PROB.
As mentioned earlier, \DET\ does not incur any spurious re-computations due to false positives but 
has to drop more differences than \PROB\ to scale to more queries (as it has a higher memory
overhead for storing the dropped VT pairs). We see that this 
advantage and disadvantage overall balance out 
for the scalability levels both \DET\ and \PROB\ can handle, i.e., they perform similarly at these scalability levels.
However, \PROB\ can consistently scale to higher levels than \DET\ (up to  $1.5 \times$).

Finally, we performed a similar experiment for PR and WCC, for which we can only run one ``query''.
We used LJ and picked a memory budget of 2.75GB for PR and 2GB for WCC, which requires less memory
and picked the lowest drop probabilities at which these budgets were enough for \DET\ and \PROB. 
Figure~\ref{fig:detvsProb} shows our results, with the necessary drop percentages presented
on top of the bars. 
We find that on PR \DET\ requires 100\% dropping rate and takes
$369$ seconds to complete
 while \PROB\ requires 90\% dropping rate and takes $268$ seconds to complete\footnote{Recall
that 100\% dropping rate does not mean all differences are dropped as we do not drop any differences 
for vertices over 13 degree in LiveJournal dataset.}. 
On WCC, \DET\ requires 90\% dropping rate and takes
$11.9$ seconds to complete
 while \PROB\ requires 70\% dropping rate and takes $11.5$ seconds to complete.
Overall, similar to our previous experiments, \PROB\ needs to drop fewer 
differences to successfully complete the experiment and leads to better performance.

\begin{figure}[tb]
	\includegraphics[width=1.5in]{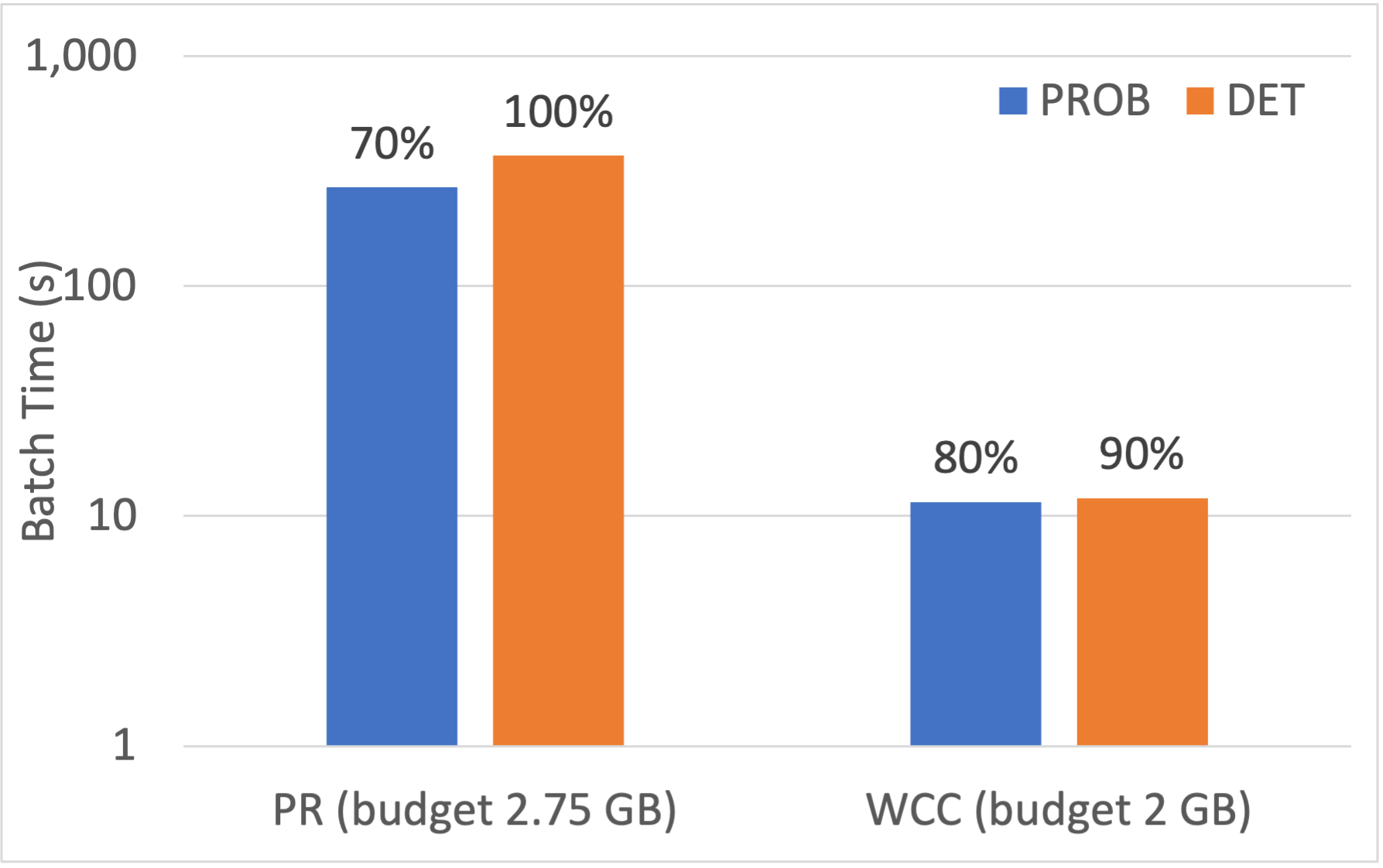}
	\caption{Comparison of \DET\ and \PROB\ when running PageRank and WCC on LJ under limited memory.}
	\label{fig:detvsProb}
	\vspace{-15pt}
\end{figure}

%

\subsection{Further Applications of Diff-IFE}
\label{sec:landmark}

Our previous experiments so far focused on demonstrating the performance tradeoffs that our optimizations offer
when evaluating continuous recursive queries using Diff-IFE. Our final set of experiments do not evaluate our optimizations.
Instead, we aim to demonstrate further applications of Diff-IFE in systems.
Specifically, we show that we can improve our \SC\ baseline for SPSP queries through using and differentially maintaining
a popular shortest path index, called {\em landmark indices}~\cite{graphSearchA,landmark-approximate-CIKM}. A landmark index is a single-source shortest distance index, i.e.,
it stores the shortest path distance from a ``landmark'' vertex to the rest of the vertices. 
We use landmark indices to prune the search space of \SC. 
Specifically, in the 
shortest path query from $s$ to $d$, the sum of the distances of $s$ to $l$ and $d$ to $l$ give an upper
bound $\ell_u$ on the shortest distance between $s$ and $d$. Similarly, the difference between the $v$ to $l$ distance 
and $d$ to $l$ distance give
a lower bound $\ell_b$ on the distance from $u$ to $d$. If $v$ is visited at distance $k$ in the Bellman-Ford algorithm,
and $k +\ell_b$ is greater than $\ell_u$, then we can avoid traversing $v$ as it cannot be on the shortest path from 
$s$ to $d$. 

We used all of our datasets, except LDBC, and picked the 10 highest-degree nodes as the landmarks and implemented an optimized
version of \SC\ in which as updates arrive at the graph, we first maintain these 10 landmark indices using Diff-IFE. 
Then, we run each registered query using our landmark-enhanced \SC, which we call \SC-landmark,
and compare this to our baseline \SC. 
We registered 100 random SPSP queries in our system and measured the end-to-end time of 100
batches of single edge insertions. Our results are shown in
Figure~\ref{fig:landmark}. The reported times for \SC-landmark include both the time to maintain the index and then (non-differentially)
evaluate each query. As shown in the figure, by using and differentially maintaining landmark indices, we can reduce \SC\ time between $43\%$ to $83\%$ (albeit now using additional memory to store both the index and the differences to differentially 
maintain the index). 


\begin{figure}
	\includegraphics[width=2in]{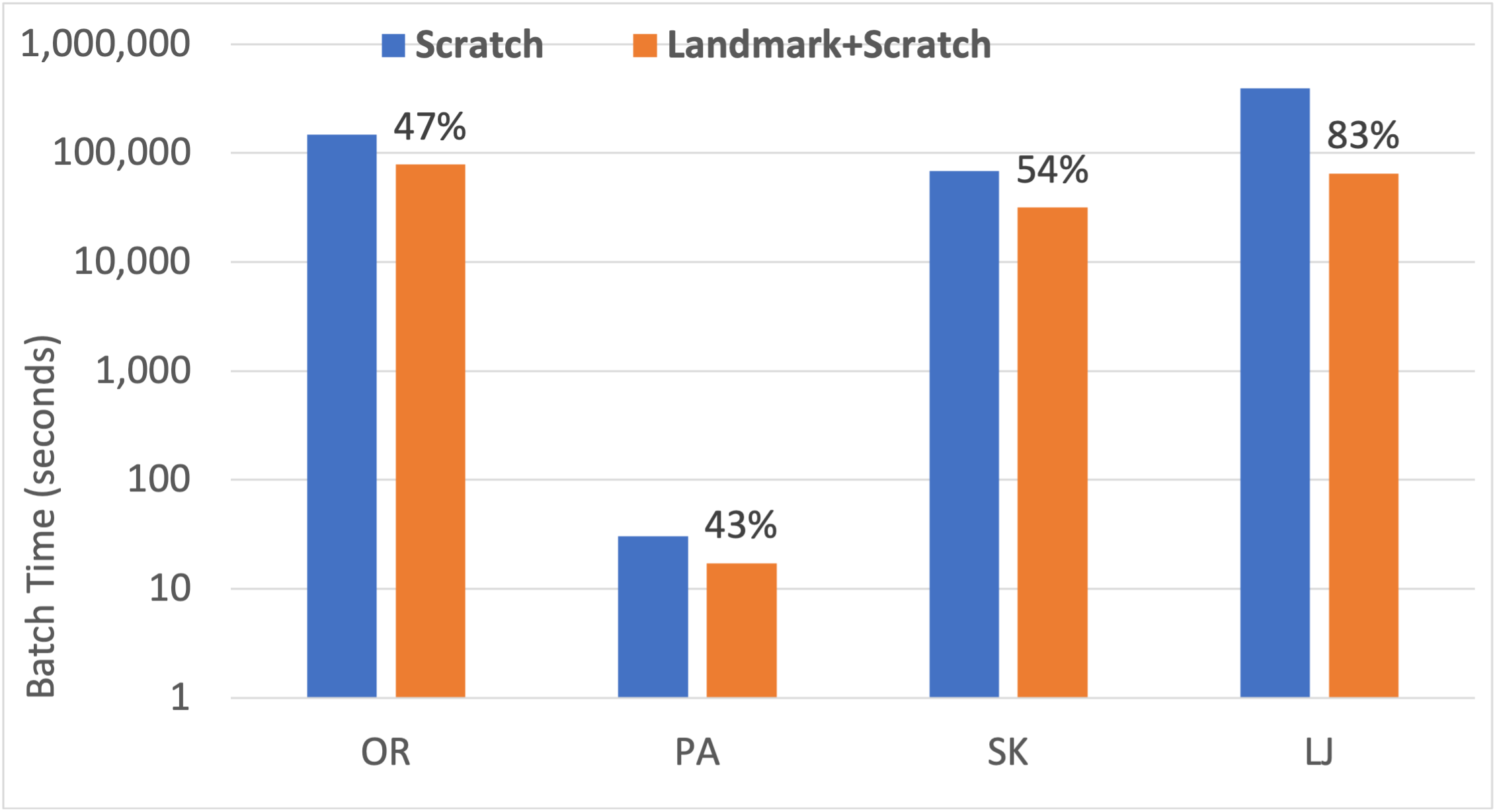}
	\vspace{-10pt}
	\caption{Comparing \SC\ vs. \SC-landmark on 100 queries and 100 batches of updates.  Numbers on orange bars are the runtime improvements of \SC-landmark. }
	\label{fig:landmark}
	\vspace{-25pt}
\end{figure}

\section{Conclusions}
\label{sec:conclusion}

Differential computation is a generic novel technique to maintain arbitrary recursive dataflow computations. 
As such, it is a promising technique 
to integrate into data management systems that aim to support continuous recursive queries.
We studied the problem of how to increase scalability of differential computation 
through optimizations that are based on dropping differences.

An important future work venue is to design more advanced algorithms than vanilla IFE that can be maintained 
efficiently with differential computation.
One example is to differentially maintain shortest path algorithms that use indices.
In Section~\ref{sec:landmark}, we demonstrated how Diff-IFE can be used to maintain landmark indices, 
but this algorithm only enhanced our baseline \SC\ algorithm with an index and does not evaluate queries
differentially. It is less clear how to design a differential shortest path algorithm that uses an index that also needs
to be updated as updates arrive at the system. One possibility is to develop two separate dataflows:
(1) that maintains the indices; (2) that uses the updates to E and the updates to indices as base collections, 
and uses both the index
and edges, e.g, by joining E, distances D, and the indices, to find shortest paths. 
No prior work we are aware of has proposed such algorithms and developing them is an important research topic.

\balance
\bibliographystyle{ACM-Reference-Format}
\bibliography{publications,csp,uw-ethesis}
\iftechreport
\appendix
\appendix

\section{Impact of Batch Size}
\label{sec:batchSizeImpact}

\begin{figure}[b]
	\centering
	\includegraphics[height=1.5in]{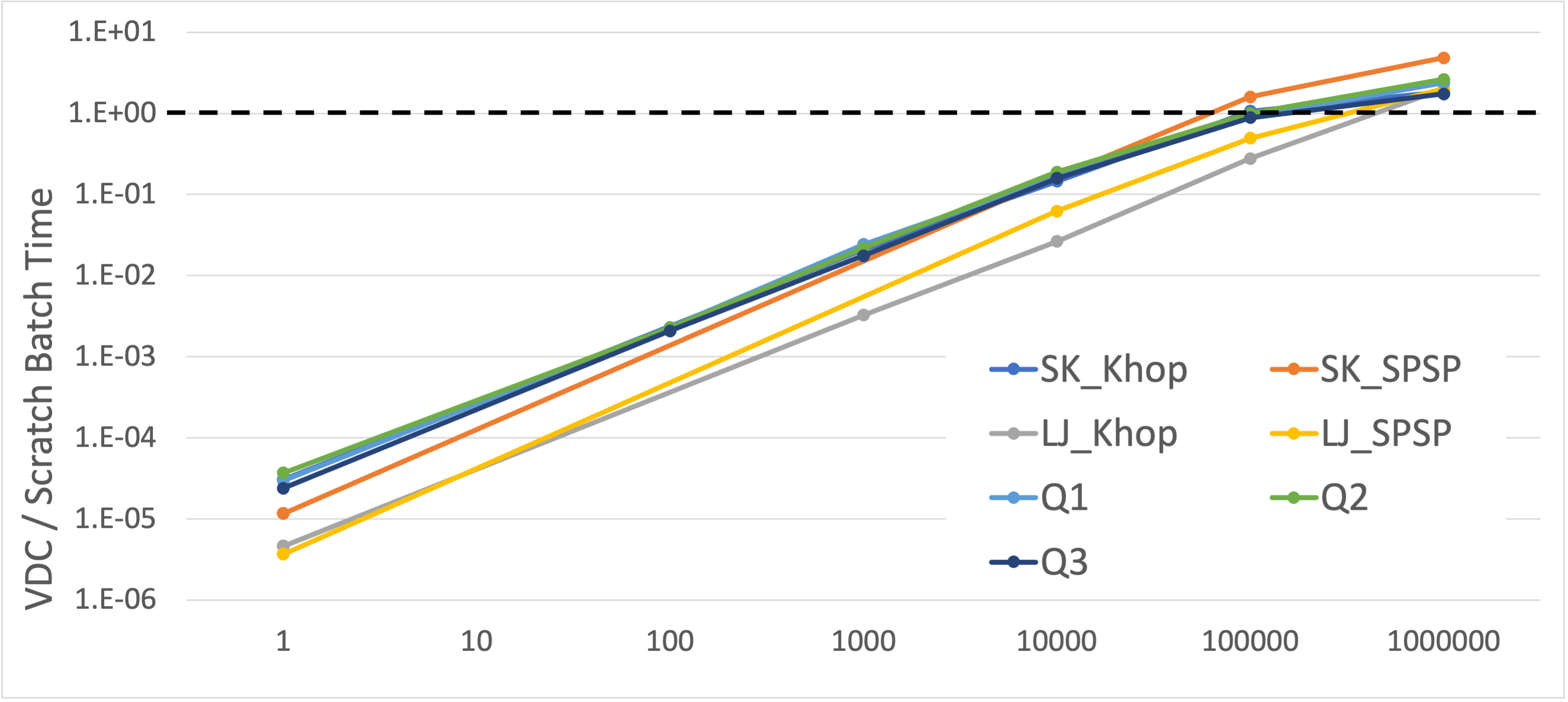}
	\caption{Influence of batch size on \DC\ performance. The $x$-axis is batch size, and $y$-axis is the ratio between \VDC\ and \SC\ batch time. A small batch size can lead \VDC\ to be several order of magnitude faster than \SC. As the batch size gets larger, \VDC's cost increases until the ratio passes 1 and \SC\ becomes faster than \VDC.}
	\label{fig:batchSizeImpactOnExecution}
\end{figure}

%

This set of experiments show the impact of batch size in the performance of \DC. We start by load the initial graph and register 10 K-hop queries, then add $1$M edge updates while changing the batch size exponentially from $1$, $10$, $100$, to $1$M. Figure~\ref{fig:batchSizeImpactOnExecution} shows the ratio between the execution time of each batch by \VDC\ and by \SC. When this ratio is more than 1, it means that \VDC\ is slower than \SC. This happens when the batch size is very large (more than 100K in our experiments).



Changing the batch size has no significant impact on \SC\ performance, because it re-executes the query from scratch anyway. In Figure~\ref{fig:batchSizeImpactOnExecution} \VDC\ gets slower as the batch size increases due to processing larger number of \diffs\ in every batch. Increasing the batch size means that the effort required by \VDC\ to maintain all differences increases, because the number of potential vertices that need to be fixed increases. It is apparent that the smaller is the batch size the better is \VDC\ performance, which shows that \DC\ is more suitable for near real-time dynamic graph updates than to infrequent updates. 


\section{Impact of delete batches}
\label{exp:delete}

All previous experiments has been assuming edge addition. In this section, we are evaluating the impact of delete batches. Figure~\ref{fig:baseline-delete} shows the baseline experiments when $25\%$ and $50\%$ of batches are deleting edges. These figures are very similar to the baseline figure (Figure~\ref{fig:Baselines}) where all batches are edge additions. 

\begin{figure*}[t]
	\centering
		\begin{subfigure}{\textwidth}
		    \centering
			\includegraphics[height=1.25in]{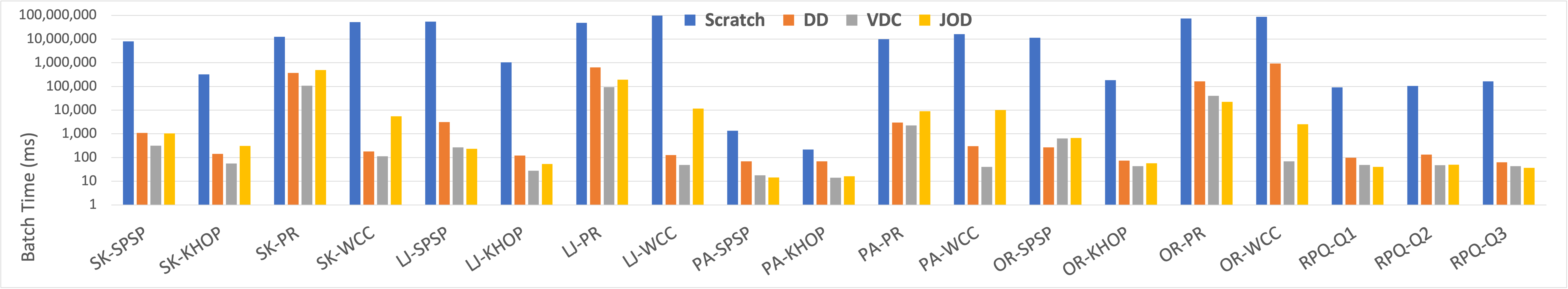}
			\caption{Total Batch Time (ms) with $25\%$ batches}
		\end{subfigure}
	\newline
	\begin{subfigure}{\textwidth}
	    \centering
		\includegraphics[height=1.25in]{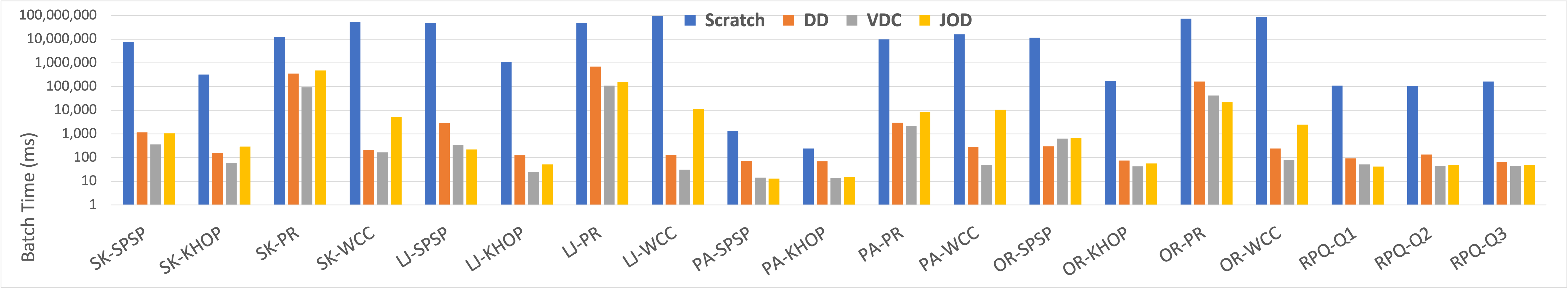}
		\caption{Total Batch Time (ms) with $50\%$ batches}
	\end{subfigure}
	\vspace{-10pt}
	\caption{Comparison between Scratch implementation (\SC), Differential Dataflow (\DD), our vanilla \DC\ implementation on top of Graphflow (\VDC), and join-on-demand (\JOD) using different ratios of edge deletions.}
	\label{fig:baseline-delete}
\end{figure*}

For further analysis, we examine our workloads with different probability of delete batches (0\%, 25\%, 50\%, 75\%, 100\%). Similar to previous experiments, we ran different queries with 100 batches, each with 1 edge. In Figure Figure~\ref{fig:delete}, we did not add ``Scratch'' because it is several order of magnitudes slower than other approaches. In general, we found that changing the ratio of batches with delete does not change our results regarding \JOD, \DET, and \PROB. An important observation, however, is that for SPSP query \VDC\ is getting slower as the delete probability increases while \JOD, \DET, and \PROB, are getting faster. This is because SPSP is a weighted query and typically has a large number of iterations and a large number of \diffs. A vanilla \DC\ implementation (\VDC) does not use early dropping, therefore deleting edges leads to adding more negative multiplicities which then adds more overhead to be stored and maintained. On the other hand, with early-dropping, deleting edges lead to reduce the number of \diffs.

\begin{figure*}
	\centering
	\begin{subfigure}{0.66\columnwidth}
		\includegraphics[width=2.2in]{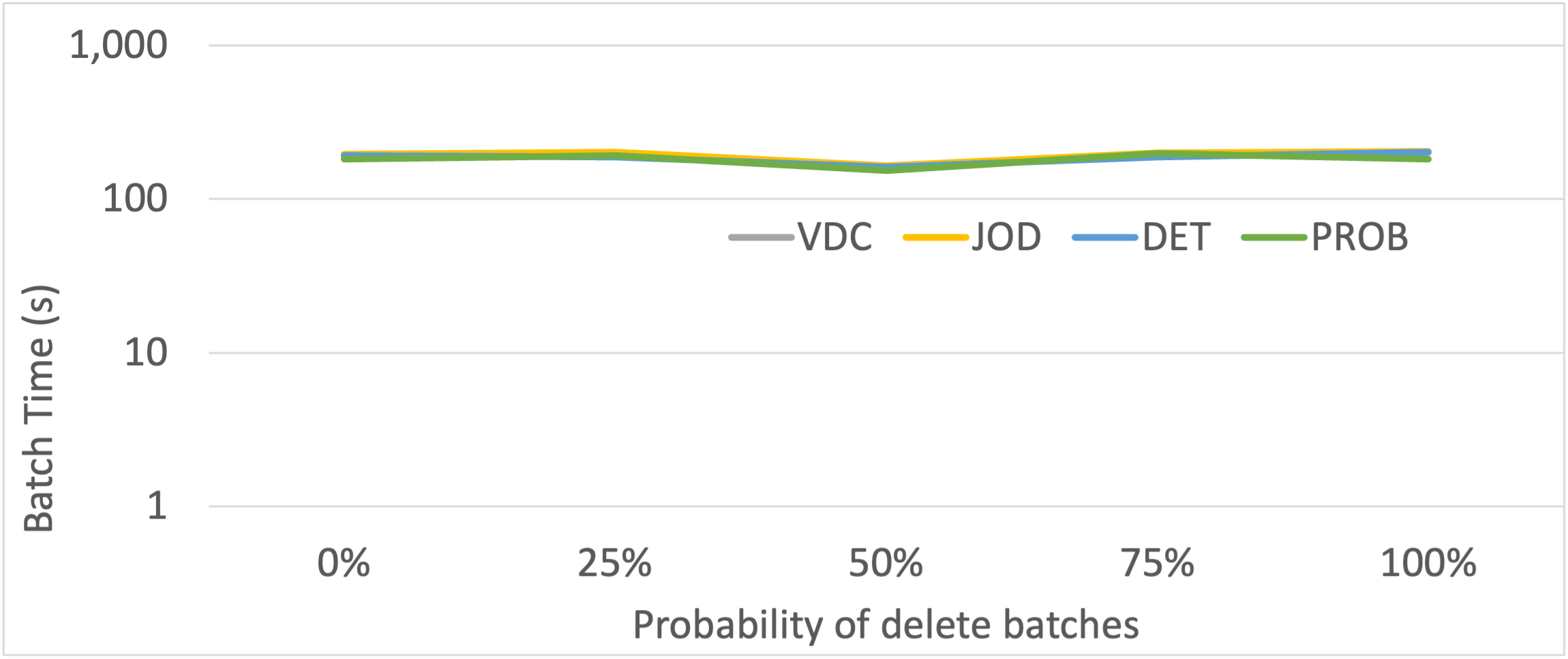}
		\caption{PagreRank}
	\end{subfigure}
	\begin{subfigure}{0.66\columnwidth}

		\includegraphics[width=2.2in]{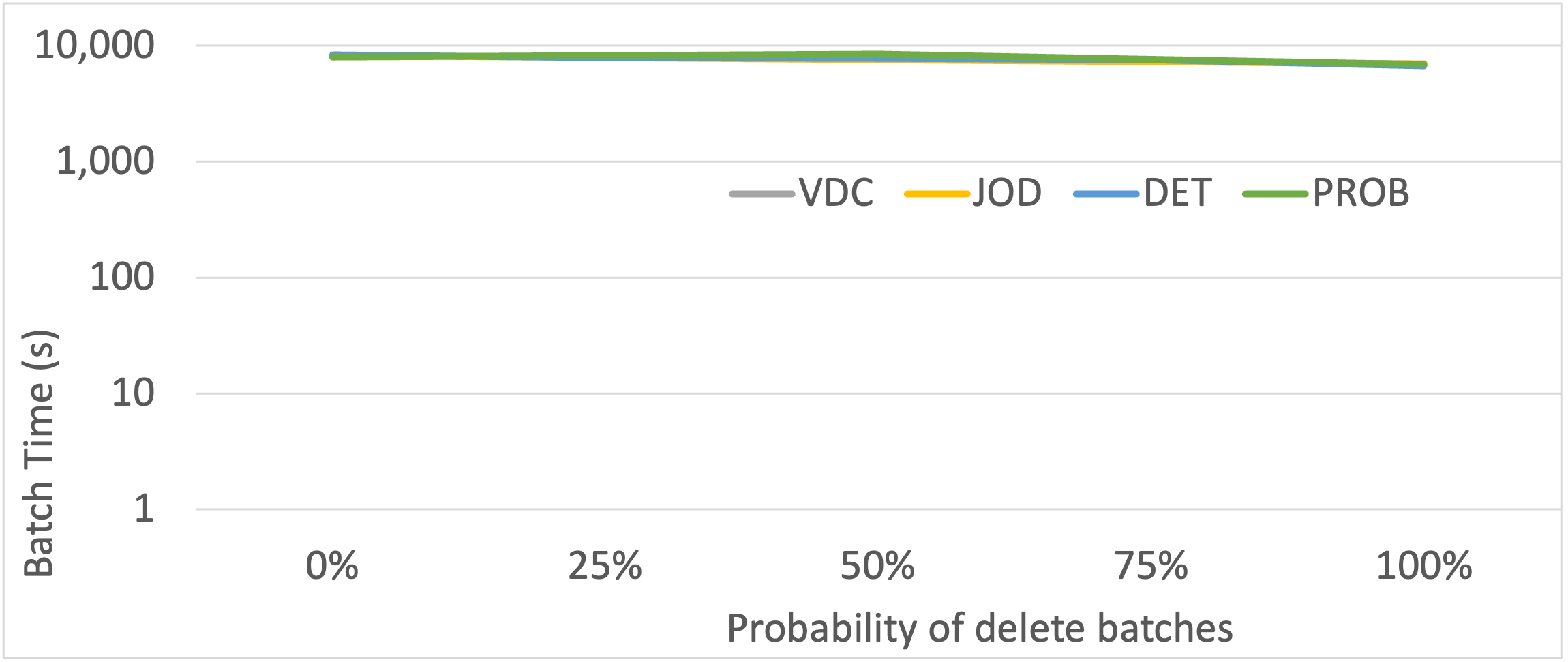}
		\caption{Connected Component}
	\end{subfigure}
	\begin{subfigure}{0.66\columnwidth}
		\includegraphics[width=2.2in]{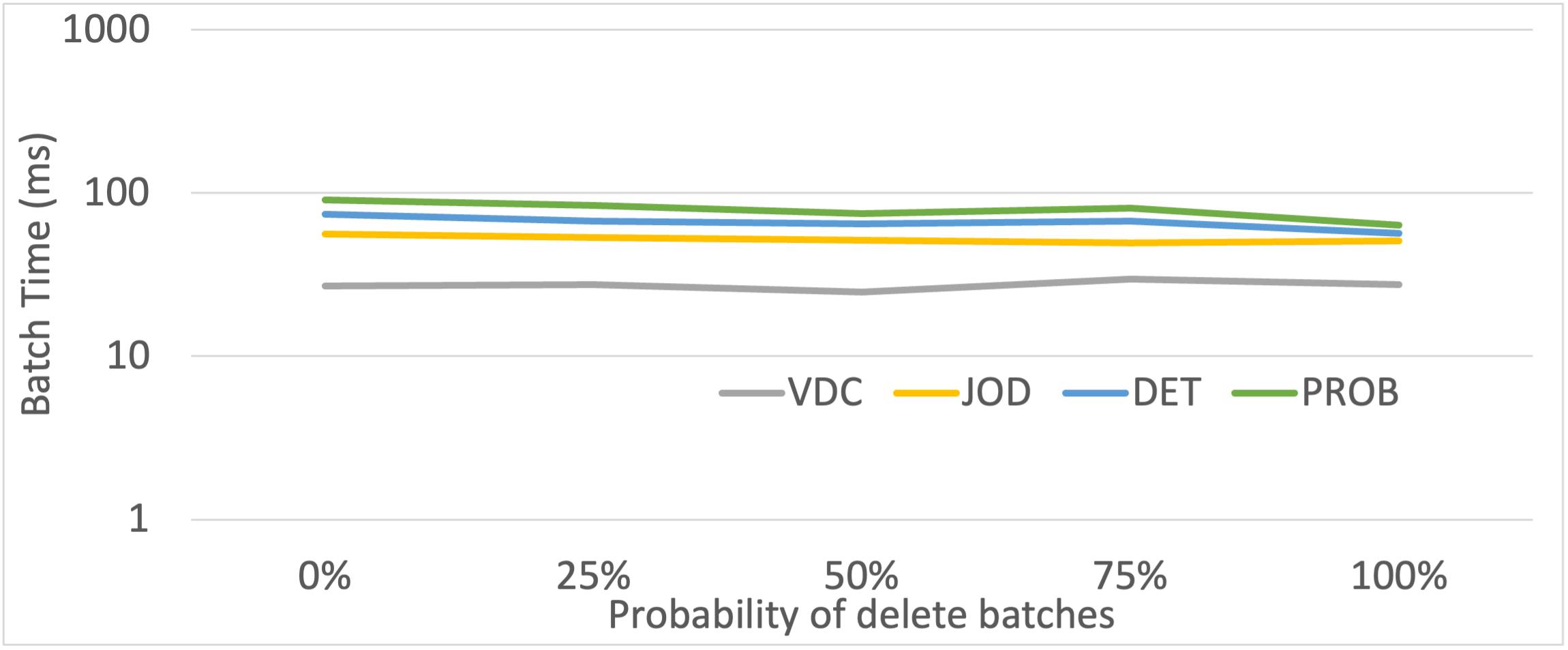}
		\caption{K-hop query}
	\end{subfigure}
	\begin{subfigure}{0.66\columnwidth}
		\includegraphics[width=2.2in]{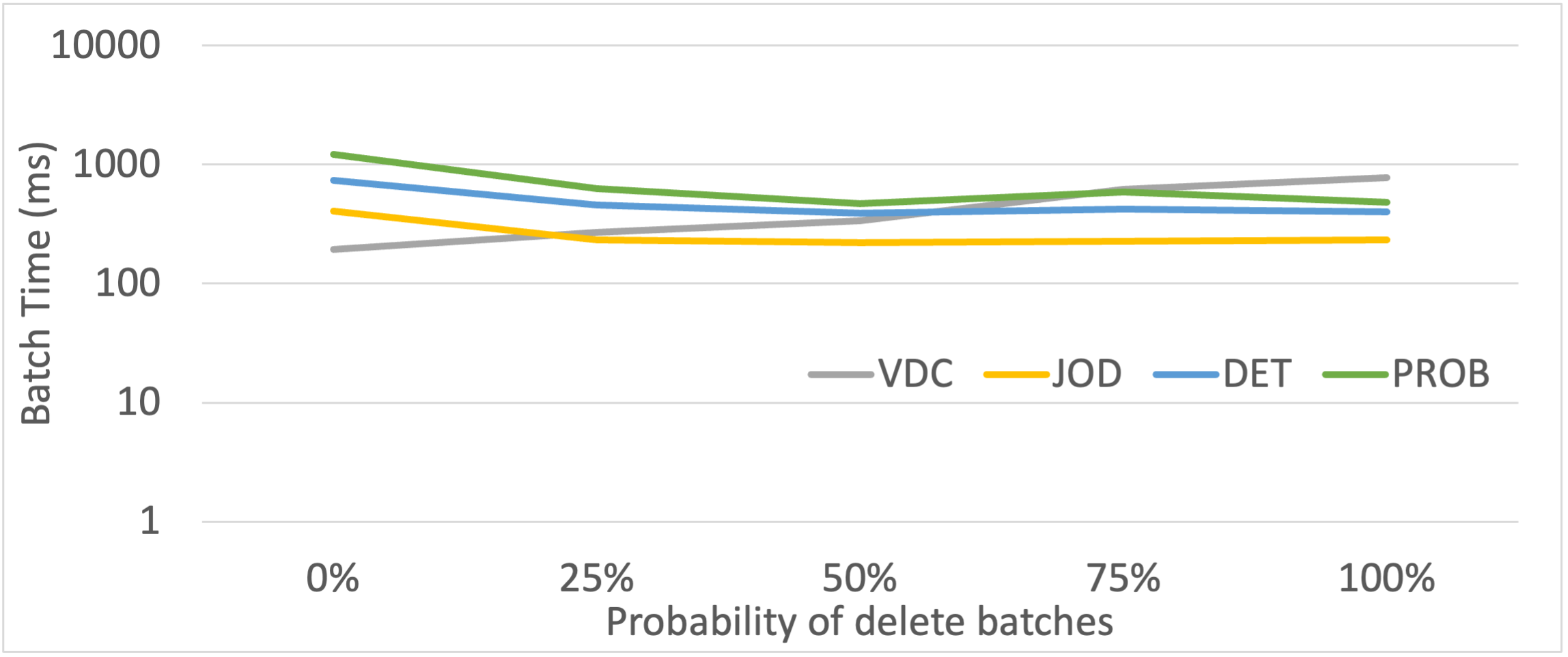}
		\caption{SPSP query}
	\end{subfigure}
	\caption{Changing the probability of delete batches while running different queries on LiveJournal dataset.}
	\label{fig:delete}
\end{figure*}

\section{Implementation}
\label{sec:implementation}







Upon an update $\delta E_{k+1}$ to the graph, our implementation of DC and DC$^{\text{JOD}}$ keeps track of a ``frontier'',
which is the list of vertices and iteration numbers during which the aggregation operator should be re-computed. 
These are stored as an array of hash sets, to remove duplicate additions of a vertex into this set, where there is
a set for each IFE iteration until $max$. Recall that $max$ is the maximum number of iterations IFE has executed 
in any of the graph versions. Since we do eager merging, in our case $max$ is the maximum IFE iteration for $G_k$. 

We store the differences in vertex states (i.e., the output of the aggregation operator) also in a hash table where the keys
are vertex IDs and the value is a list of pairs $\langle i, s_v^{i} \rangle$ that is sorted by $i$, where $s_v^i$ is the new state of 
vertex $v$ in IFE iteration $i$. Recall again that because of eager merging our timestamps are one dimensional. To
check for the state of a vertex $v$ at iteration $i$, we find the latest available iteration $i^* \le i$ in $v$'s sorted
list using binary search.

For partial dropping approaches discussed in Section~\ref{sec:partial}, we use a separate data structure (\texttt{DroppedVT}) to store the dropped vertex-timestamp pairs. For \DET\ we use a hash table, such that the key is vertex-id and the value is a sorted list of dropped iterations. When the algorithm needs to check if a pair ($\langle v, i \rangle$) exists in \texttt{DroppedVT}, it finds the list of iterations using the vertex-id ($v$) and then search for the latest dropped iteration $d^* \le i$ in the list. For \PROB\ the hash table is replaced by a Bloom filter. Each object in this Bloom filter is $8$-bytes object and is constructed by concatenating vertex-id and iteration number together using binary operations. Searching in the Bloom filter requires constructing a search object first, using binary operations, and then check if the Bloom filter contains this object. 
\fi

\end{document}
\endinput